\documentclass[a4paper,UKenglish,cleveref, autoref]{lipics-v2019}
\usepackage[T1]{fontenc}
\usepackage[utf8]{inputenc}
\usepackage{xcolor}
\usepackage{amsmath}
\usepackage{amssymb}
\usepackage{wrapfig}
\usepackage{graphicx}
\usepackage{mathcommand}
\usepackage{thmtools}
\usepackage{thm-restate}
\nolinenumbers
\AtBeginDocument{\hypersetup{breaklinks=true,hidelinks}}

\usepackage[notion,quotation,electronic]{knowledge}


\bibliographystyle{plainurl}


\newcommand\synonym[1]{\knowledge{#1}{synonym}}

\knowledgedirective{math notion}{notion}
\knowledgedirective{standard}{notion}


\knowledge{i.e.}{text=\textit{i.e.}}

\knowledge{ranked alphabet}{standard}

\knowledge {rank}{standard}

\knowledge{letter}[Letters|letters]{standard}
	\synonym{letter of rank~$n$}
	\synonym{letters~$a$ of rank~$n$}
	\synonym{letter~$a$ of rank~$n$}
	
\knowledge{word}[Words|words]{standard}

\knowledge {term of arity~$n$}{standard}
  \synonym {term of arity~$0$}
  \synonym {term of arity~$1$}
  \synonym {terms of arity~$n$}
  \synonym {terms of arity~$0$}
  \synonym {terms of arity~$1$}
  \synonym {term~$t$ of arity~$n$}
  
\knowledge {context}[contexts|Contexts]{standard}

\knowledge {term}[terms|Terms]{standard}

\knowledge{node}[nodes|Nodes]{standard}

\knowledge {inner nodes}{standard}

\knowledge{hole}[holes|Holes]{standard}
	\synonym {$i$th-hole}

\knowledge{root node}[root]{standard}

\knowledge{height}{standard}
	\synonym{height of a term~$s$}

\knowledge{depth}{standard}
	\synonym{depth of a context~$c$}

\knowledge{size}{standard}
	\synonym{size of a term~$s$}

\knowledge{branch}{}


\knowledge{NFA}{standard}
  \synonym{non-deterministic (tree) automaton}
  \synonym{non-deterministic automaton}
  \synonym{non-deterministic automata}
  \synonym{Non-deterministic automata}

\knowledge{run}{standard}
  \synonym{run of the automaton}
  \synonym{runs of the automaton}
  \synonym{Runs of the automaton}
  \synonym{runs}
  \synonym{Runs}

\knowledge {language accepted by the automaton}{standard}

\knowledge{accepting run}{standard}
	\synonym{accepting runs}
	\synonym{accepting run of the automaton}
	\synonym{accepting runs of the automaton}
	\synonym{Accepting runs of the automaton}
	\synonym{accepting run over it}
	\synonym{accepting runs over it}
	\synonym{accepting run~$\rho $ of the automaton}
	\synonym{accepting run of~$\maxA $ over~$c\compterm s$}
	\synonym{accepting run~$\rho'$ of~$\maxA$ over $c\compterm s$}
	\synonym{accepting run of~$\sepA $ over~$s$}
	
\knowledge{accepting run from}{standard}
	\synonym{accepting run over~$c$ from~$p$}
	\synonym{accepting run over~$c$ from~$q$}
	\synonym{accepting run over~$c$ from~$(R,P)$}
	\synonym{accepting run from~$p$}
	\synonym{accepting run~$\rho _c$ over~$c$ from some state~$p$}
	\synonym{accepting run of~$\maxA $ over~$s$}
	
\knowledge{run to}{standard}
	\synonym{run to state~$q$}
	\synonym{run~$\tau '$ over $t$ to state~$q$}
	\synonym{run over~$t$ to~$p$}
	\synonym{run over~$t$ to~$(R,P)$}
	\synonym{runs~$\rho $ of~$\maxA $ over~$s$ to a state~$p\in P$}
	\synonym{runs~$\tau $ of~$\minA $ over~$s$ to a state~$q\in P$}
	\synonym{run $\rho _t$ over~$t$ to state~$p$}
	\synonym{run over~$t$ to~$q$}
	\synonym{run of~$\biA $ over~$s$ to state~$p$}
	\synonym{run of~$\maxA $ over~$t$ to state~$p$}
	\synonym{run~$\rho ''$ over $s''$ to some state~$q$}
	\synonym{run of~$\sepA $ over~$s$ to state~$(R,P,t)$}
	\synonym{run~$\rho '$ over $t$ to state~$p$}
	\synonym{run~$\rho _s$ over~$s$ to state~$p$}
	\synonym{run of~$\biA $ over~$s$ to~$p$}
	\synonym{runs~$\tau $ of~$\biA $ over~$s$ to~$p$}
	\synonym{run~$\tau _p$ over~$s'$ to~$q_p$}
	\synonym{run~$\tau ''_p$ over~$s'$ to~$q_p$}
	\synonym{run of~$\maxA $ over~$s$ to state~$p$}
	\synonym{run over~$t=c\compterm s'$ to~$p$}
	\synonym{run~$\rho _s$ over~$s$ to~$(P,R,t)$}
	\synonym{run of~$\sepA $ over~$s$ to a state of the form~$(R,P,t)$}
	\synonym{run of $\sepA $ over~$s$ to~$(R,P,t)$}
	\synonym{run of~$\sepA $ over~$s$ to~$(R,P,t)$}
	\synonym{run~$\rho _\sep $ of~$\sepA $ over~$s$ to a state of the form~$(R,\biFinal ,t)$}
	\synonym{run $\rho '$ over~$t$ to state~$p$}
	\synonym{run of~$\maxA $ over~$a(t_0,\dots ,t_{n-1})$ to state~$p\in P$}
	\synonym{run~$\rho _i$ of~$\maxA $ over~$t_i$ to~$p_i$}
	\synonym{run~$\rho '_i$ of~$\maxA $ over~$t_i$ to~$p_i$}
	\synonym{run~$\rho '$ of~$\maxA $ over~$a(t_0,\dots ,t_{n-1})$ to~$p$}
	\synonym{runs~$\tau $ of~$\maxA $ over~$s$ to~$p$}
	\synonym{runs~$\tau $ of~$\minA $ over~$s$ to~$p$}

\knowledge{run from to}{standard}
	\synonym{run from~$p$ to~$q$}
	\synonym{run from state~$p$ to state~$q$}
	\synonym{runs of~$\maxA $ over~$m$ from~$p$ to~$p$}
	\synonym{runs of~$\minA $ over~$m$ from~$q$ to~$q$}
	\synonym{run~$\rho '$ over~$c'$ from state~$q$ to~$p$}
	\synonym{runs~$\rho $ of~$\maxA $ over~$m$ from~$p$ to~$p$}
	\synonym{runs~$\tau $ of~$\minA $ over~$m$ from~$q$ to~$q$}
	\synonym{run~$\tau '_p$ over~$m$ from~$q_p$ to~$q_p$}

\knowledge{accepted}{standard}
	\synonym{accepted by the automaton}

\knowledge{reachable}{notion}
	\synonym{reachable in~$t$}
	
\knowledge{productive}{notion}
	\synonym{productive in~$c$}
 
\knowledge{unambiguous NFA}[unambiguous automaton]{notion}

\knowledge{initial state}[Initial states|initial states|initial]{standard}

\knowledge{final state}[Final states|final states|final]{standard}
	\synonym{set of final states}
	
\knowledge{state}[states]{standard}

\knowledge{transition}[transitions|Transitions]{standard}
	\synonym{transition relation}


\knowledge{automaton with weights}{notion}

\knowledge{weight}[weights|Weights]{notion}

\knowledge{weight of a run}{notion}
	\synonym{weight of the run}
	
\knowledge{weight of an accepting run}{notion}
    \synonym{weight of the accepting run}	
	
\knowledge{tropical automaton}{notion}
   \synonym{tropical automata}
   \synonym{Tropical automata}
   \synonym{tropical (tree) automaton}

\knowledge{max-plus}[max-plus automaton|max-plus automata|Max-plus automata]{notion}

\knowledge{min-plus}[min-plus automaton|min-plus automata|Min-plus automata]{notion}

\knowledge{unambiguous tropical automaton}{notion}
	\synonym{unambiguous tropical automata}
	\synonym{Unambiguous tropical automata}


\knowledge{refines}{notion}
	\synonym{refinement with shift}
	\synonym{$t$ refines $s$ for~$P$ with shift~$x$}
	\synonym{$t$ refines~$s$ for~$P$ with shift~$x$}
	\synonym{$u$ refines~$t$ for~$P$ with shift~$y$}
	\synonym{$u$ refines~$s$ for~$P$ with shift~$x+y$}
	\synonym{$t$ refines~$s$ for~$\Prod (c)$ with shift~$x$}
	\synonym{$t$ refines $s$ for~$P$ with some shift}
	\synonym{$t$ refines~$s$ for~$P_0$ with shift~$x$}
	\synonym{refines $s$ for~$P$ with some shift}
	\synonym{$t$ refines~$a(t_0,\dots ,t_{n-1})$ for~$P$}
	\synonym{$t$ refines~$s$ for~$\Prod (c)$}
	\synonym{shift refine relation}
	\synonym{$t_i$ refines~$s_i$ for~$P_i$ with shift~$x_i$}
	\synonym{$a(t_0,\dots ,t_{n-1})$ refines~$a(s_0,\dots ,s_{n-1})$ for~$P$ with shift~$x_0+\cdots +x_{n-1}$}
	\synonym{$t$ refines~$s$ for~$P$ with shift~$\weightrun (\rho )$}
	\synonym{$t_i$ refines~$s_i$ for~$P_i$ with shift~$\weightrun (\rho _i)$}
	\synonym{$a(t_0,\dots ,t_{n-1})$ refines~$s$ for~$P$ with shift~$\weightrun (\rho _0)+\cdots +\weightrun (\rho _{n-1})$}
	\synonym{$t$ refines~$a(t_0,\dots ,t_{n-1})$ with shift~$\weight _\sep (\delta )$}
	\synonym{$t$ refines $s$ with shift~$\weightrun (\rho _0)+\cdots +\weightrun (\rho _{n-1})+\weight _\sep (\delta ) = \weightrun (\rho )$}
	\synonym{$t$ refines~$s$ for~$\biFinal $ with shift~$\weightrun (\rho _\sep )$}
	\synonym{$s$ refines~$s$ for~$P$ with shift~$0$}


\knowledge {\Q _{\max }}{math notion}
\knowledge {\Final _{\max }}{math notion}
\knowledge {\Transitions _{\max }}{math notion}
\knowledge {\weight _{\max }}{math notion}
\knowledge {\automaton _{\max }}{math notion}

\knowledge {\Q _{\min }}{math notion}
\knowledge {\Final _{\min }}{math notion}
\knowledge {\Transitions _{\min }}{math notion}
\knowledge {\weight _{\min }}{math notion}
\knowledge {\automaton _{\min }}{math notion}

\knowledge {\Q _{\sep }}{math notion}
\knowledge {\Final _{\sep }}{math notion}
\knowledge {\Transitions _{\sep }}{math notion}
\knowledge {\weight _{\sep }}{math notion}
\knowledge {\automaton _{\sep }}{math notion}

\knowledge {\Q _{\pro }}{math notion}
\knowledge {\Final _{\pro }}{math notion}
\knowledge {\Transitions _{\pro }}{math notion}
\knowledge {\automaton _{\pro }}{math notion}

\knowledge {\biA }{math notion}
\knowledge {\biFinal }{math notion}
\knowledge {\biTransitions }{math notion}
\knowledge {\biweight }{math notion}
\knowledge {\biQ }{math notion}

\knowledge {\automaton}{}
\knowledge {\Q }{}
\knowledge {\Final }{}
\knowledge {\Transitions }{}
\knowledge{\automaton_{{\max}}}{}
\knowledge{\automaton_{{\min}}}{}

\newtheorem{fact}[theorem]{Fact}
\knowledgeconfigureenvironment{theorem,lemma,proof,proposition,fact}{}

\newmathcommandPIE\sem[1]{\kl[\sem#3]{[\![}#1\kl[\sem#3]{]\!]#3}#2#4}
\knowledge {\sem}{notion}
\knowledge {\sem_{\max}}{notion}
\knowledge {\sem_{\min}}{notion}



\newrobustcmd\sep{\mathrm{sep}}
\newrobustcmd\pro{\mathrm{pro}}
\newrobustcmd\minA{\automaton_\min}
\newrobustcmd\maxA{\automaton_\max}
\newrobustcmd\sepA{\automaton_\sep}
\newrobustcmd\proA{\automaton_\pro}

\newmathcommandPIE\automaton{\kl[\automaton#2]{\mathcal{A}#2}#1#3}

\newrobustcmd\product{\cdot}

\newrobustcmd\run{\rho}
\newrobustcmd\transitions{\Delta}
\newmathcommandPIE\states{\kl[\states#1#2]{Q#1#2#3}}

\newmathcommand\powerset{\mathcal{P}}

\newcommand\Nats{\ensuremath{\mathbb{N}}}
\newcommand\Reals{\ensuremath{\mathbb{R}}}
\newrobustcmd\Ints{\mathbb{Z}}
\newrobustcmd\Rats{\mathbb{Q}}
\newrobustcmd\Comps{\mathbb{C}}

\newmathcommand\rank{\kl[\rank]{\mathrm{rank}}}
\knowledge\rank{standard}

\newmathcommand\A{A} 
\newmathcommandPIE\Q{\kl[\Q#1#2]{Q#1#2}#3} 
\newmathcommandPIE\Final{\kl[\Final#1#2]{F#1#2}#3} 
\newmathcommandPIE\Transitions{\kl[\Transitions#1#2]{\Delta#1#2}#3} 

\newmathcommand\biA{\kl[\biA]{\mathcal{A}}}
\newmathcommand\biFinal{\kl[\biFinal]{F}}
\newmathcommand\biTransitions{\kl[\biTransitions]{\Delta}}
\newmathcommand\biQ{\kl[\biQ]{Q}}
\newmathcommand\biweight{\kl[\biweight]{\mathrm{weight}}}

\newmathcommandPIE\weight{\kl[\weight#2]{\mathrm{weight}#2}#1#3} 
\newmathcommandPIE\weightrun{\kl[\weightrun#2]{\mathrm{weight}#2}#1#3} 
\newmathcommandPIE\weightacc{\kl[\weightacc#2]{\mathrm{weight}^{\mathrm{acc}}#2}#1#3} 
\knowledge\weight{math notion}
\knowledge\weightrun{math notion}
\knowledge\weightacc{math notion}

\ExplSyntaxOn
\newcommand\newTermMathNotion[3][math~notion]{%
  \newmathcommand#2{\kl[#2]{\mathrm{#3}}}%
  \exp_args:Nc\newmathcommand{intro\cs_to_str:N#2}{\intro[#1]{\mathrm{#3}}}
  \knowledge#2{#1}%
  }
\ExplSyntaxOff

\newTermMathNotion[standard]\Nodes{Nodes}
\newTermMathNotion[standard]\Terms{Terms}
\newTermMathNotion[standard]\Contexts{Contexts}
\newTermMathNotion[standard]\height{height}
\newTermMathNotion[standard]\depth{depth}
\newTermMathNotion[standard]\size{size}

\newTermMathNotion[notion]\Prod{Prod}
\newTermMathNotion[notion]\Reach{Reach}
\newTermMathNotion[notion]\Reachable{Reachable}
\newTermMathNotion[notion]\Productive{Productive}
\newTermMathNotion[notion]\sr{sr}

\renewmathcommand\k{k}

\renewmathcommand\root{\kl[\root]{\mathrm{root}}}
\knowledge\root{standard}

\newcommand\compterm{\mathrel{\kl[\compterm]{\circ}}}
\knowledge\compterm{standard}

\title{Unambiguous separators for tropical tree automata}
\author{Thomas Colcombet}{IRIF, CNRS}{thomas.colcombet@irif.fr}{}
		{Supported by the European Research Council (ERC) under the European Union’s Horizon 2020 research and innovation programme (grant agreement No.670624), and the DeLTA ANR project (ANR-16-CE40-0007)}
\author{Sylvain Lombardy}{LaBri, Université de Bordeaux}{sylvain.lombardy@labri.fr}{}{}
\authorrunning{T. Colcombet and S. Lombardy}
\keywords{Tree automata, Tropical semiring, Separation, Unambiguity}

\ccsdesc{Algebraic language theory}
\ccsdesc{Quantitative automata}
\ccsdesc{Tree languages}

\begin{document}

\maketitle

\begin{abstract}
In this paper we show that given a max-plus automaton (over trees, and with real weights) computing a function~$f$ and a min-plus automaton (similar) computing a function~$g$ such that~$f\leqslant g$, there exists effectively an unambiguous tropical automaton computing~$h$ such that~$f\leqslant h\leqslant g$.

This generalizes a result of Lombardy and Mairesse of 2006 stating that series which are both max-plus and min-plus rational are unambiguous. This generalization goes in two directions: trees are considered instead of words, and separation is established instead of characterization (separation implies characterization). The techniques in the two proofs are very different.
\end{abstract}

\section{Introduction}

"Tropical automata" is a nickname for weighted automata (automata parameterized by a semiring as introduced by Schützenbgerger \cite{Schutzenberger61b}) over a tropical semiring.
This is a particularly simple model of finite state automata that describe functions rather than languages.
It exists in two forms, "max-plus" and "min-plus automata".
Essentially, a tropical automaton~$\automaton$ is a  non-deterministic automaton for which each transition is labelled by a real weight (or an integer, or a natural number, depending on the variants). 
This weight is extended into a weight for a run: the sum of the weights of the transitions involved. 
A "max-plus automaton" computes the function $\sem{\automaton}\colon A^*\rightarrow \Reals\cup\{\bot\}$
which to an input word associates the maximum weight of an accepting run over the input, or~$\bot$ if there is no accepting runs. If it is a "min-plus automaton",  minimum is used instead of maximum.

The use of tropical automata arises naturally in different contexts:
"max-plus automata" have been used for modeling scheduling constraints
(see for instance \cite{GaubertMairesse95}) or worst case behaviors (see for instance \cite{ColcombetDaviaudZuleger17} for computing the asymptotic worst case execution time of loops under the size-change abstraction); 
"min-plus automata" are used for optimisation questions (these are for instance used as a key tool in the decision of the star-height problem \cite{Hashiguchi88}). In all these situations, non-trivial decision procedures are used (\cite{Hashiguchi82,Leung91,ColcombetDaviaudZuleger14}).

The starting point of this work is a result from 2006 of Lombardy and Mairesse:
    \begin{theorem}[\cite{LombardyMairesse06,LombardyMairesse07}]\label{theorem:lombardy-mairesse}
    A map~$f\colon A^*\rightarrow \Reals+\{\bot\}$ 
    which is both definable by a min-plus and by a max-plus  automaton is definable by an unambiguous tropical automaton.
	\end{theorem}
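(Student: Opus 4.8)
The plan is to assemble from $\maxA$ and $\minA$ one tropical automaton $\biA$ over $A^*$ with the property that every accepting run over an input $w$ already weighs exactly $f(w)$, while $\biA$ still accepts all of $\mathrm{dom}(f)$; disambiguating $\biA$ then gives the theorem. Write $\maxA$ (with state set $Q_{\max}$, transition set $\Delta_{\max}$, final states $F_{\max}$) for a max-plus automaton computing $f$ and $\minA$ (with $Q_{\min},\Delta_{\min},F_{\min}$) for a min-plus one; both may be taken trim, so that $\mathrm{dom}(\maxA)=\mathrm{dom}(\minA)=\mathrm{dom}(f)$. For a word $w=uv$ and accepting runs $\rho$ of $\maxA$ and $\tau$ of $\minA$ over $w$, put $g(u):=\mathrm{weight}(\rho_u)-\mathrm{weight}(\tau_u)$, the \emph{delay} after the prefix $u$. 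Then $g$ starts at $0$, and if $\rho$ and $\tau$ are \emph{optimal} (both of total weight $f(w)$) it also ends at $0$, because any accepting run of $\maxA$ over $w$ weighs at most $f(w)$ and any accepting run of $\minA$ at least $f(w)$; in particular for any accepting run of an automaton with an end-delay of $0$, the two halves weigh the same and hence equal $f(w)$.

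The technical heart is the \emph{bounded-delay lemma}: there is a constant $B$, depending only on $\maxA$ and $\minA$, such that for every $w\in\mathrm{dom}(f)$ and every pair of optimal accepting runs $\rho,\tau$ over $w$, one has $|g(u)|\le B$ for every prefix $u$ of $w$. I would prove it by a pumping argument on the product automaton, whose states are the pairs $(p,q)$. Fix $w,\rho,\tau$ and look at the \emph{record positions}, those $i$ at which $g(w[0..i])$ strictly exceeds all earlier values. If two record positions $i<j$ carried the same pair $(p,q)$, then, writing $w=u\,m\,v$ with $m=w[i..j]$, the segment of $\rho$ over $m$ is a loop at $p$ of some weight $\alpha$, the segment of $\tau$ over $m$ is a loop at $q$ of some weight $\beta$, and $\alpha-\beta$ is the increase of $g$ from $i$ to $j$, hence $>0$. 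Pumping both segments $k$ times inside $w$ gives, for every $k\ge 0$, an accepting run of $\maxA$ of weight $f(w)+(k-1)\alpha$ and an accepting run of $\minA$ of weight $f(w)+(k-1)\beta$ over the same word $u\,m^k\,v$; since $\maxA$ and $\minA$ compute one and the same function, $f(w)+(k-1)\alpha\le f(u\,m^k\,v)\le f(w)+(k-1)\beta$ for all $k$, forcing $\alpha\le\beta$ — a contradiction. So the record positions carry pairwise distinct pairs, there are at most $|Q_{\max}||Q_{\min}|$ of them, and since $g$ varies by at most $2\max_\delta|\mathrm{weight}(\delta)|$ per letter, each record exceeds the previous by at most that amount; this bounds $g$ from above. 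For the lower bound I would apply the upper bound just obtained to the reversed automata $\maxA^{R}$ and $\minA^{R}$ — which compute the reversal of $f$, again one by maximum and one by minimum — at the mirror position, using that for optimal runs the weight of $\rho$ on $u$ is $f(w)$ minus the weight of the matching suffix of $\rho$, which is a prefix of the corresponding run of $\maxA^{R}$.

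Granting the lemma, let $\biA$ be the product $\maxA\times\minA$ carrying an extra component for the current delay clipped to $\{-B,\dots,B\}$: a state is $(p,q,d)$; on a letter $a$, if $\maxA$ has a transition of weight $x$ to $p'$ and $\minA$ one of weight $y$ to $q'$ with $d+x-y\in\{-B,\dots,B\}$, then $\biA$ moves to $(p',q',d+x-y)$ with weight $y$; final states are $(p,q,0)$ with $p\in F_{\max}$, $q\in F_{\min}$. An accepting run of $\biA$ over $w$ unfolds into accepting runs $\rho,\tau$ with $g$ ending at $0$, hence of equal total weight, hence both $=f(w)$ by the sandwich above, so the run weighs $\mathrm{weight}(\tau)=f(w)$; conversely any pair of optimal $\rho,\tau$ — which exists since $f(w)$ is a max, resp.\ min, over the finitely many runs over $w$ — yields, via the bounded-delay lemma, such an accepting run of $\biA$. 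So $\biA$ is a min-plus automaton computing $f$, all of whose accepting runs over a common word carry equal weight. It remains to disambiguate $\biA$ while preserving weights: fixing a linear order on its transitions, the set of lexicographically least accepting runs, read as words over the transition alphabet, is regular — a run $t_1\cdots t_n$ is excluded exactly when for some $i$ a smaller transition out of the state reached after $t_1\cdots t_{i-1}$ would still let the remaining letters be accepted, a condition an automaton checks by guessing $i$ and simulating $\biA$ on the suffix — so an automaton for this set, projected down to $A$, is unambiguous and its unique run over each accepted word follows a genuine accepting run of $\biA$; transporting the weights of $\biA$ along that correspondence yields an unambiguous tropical automaton computing $f$.

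I expect the bounded-delay lemma to be the only real difficulty: the construction and the disambiguation are bookkeeping, whereas that lemma is precisely where both hypotheses on $f$ are used simultaneously — in the pumping step the max-plus reading lower-bounds $f(u\,m^k\,v)$ and the min-plus reading upper-bounds it, pinching the weight $\alpha-\beta$ of the pumped loop. A smaller subtlety is that the delay must be controlled for \emph{optimal} pairs of runs rather than arbitrary ones, which is why the lower bound has to be routed through the reversed automata instead of obtained by a naive symmetric argument.
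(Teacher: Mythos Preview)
Your route is quite different from the paper's. The paper does not reprove \cref{theorem:lombardy-mairesse} directly; it derives it in two lines from the separation theorem: take $\maxA$ and $\minA$ both computing~$f$, apply \cref{theorem:main} to get an unambiguous $\sepA$ with $f=\sem\maxA\leqslant\sem\sepA\leqslant\sem\minA=f$, done. Your proposal instead reconstructs a direct bounded-delay argument in the spirit of the original Lombardy--Mairesse proof.

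The bounded-delay lemma and its pumping proof are sound (and indeed mirror the paper's \cref{lemma:pumping}), but the downstream construction has a genuine gap for the stated theorem. Your automaton $\biA$ has states $(p,q,d)$ with $d\in\{-B,\dots,B\}$: you are storing the delay as a bounded \emph{integer}. The theorem, however, is for real weights, and then the delays $g(u)$ are real numbers in $[-B,B]$. The finitely many transition weights need not generate a discrete subgroup of~$\Reals$ (already weights $1$ and $\sqrt2$ make the reachable delays dense in every interval), so the set of delays that actually occur inside $[-B,B]$ is in general infinite, and your $\biA$ has infinitely many states. The paper explicitly flags this obstacle in the introduction: tracking bounded weight differences in the state works over the integers or rationals, but not over the reals, which is precisely why it develops the refinement-with-shift machinery instead of a delay counter.

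So: over $\Ints$ or $\Rats$ your argument is essentially correct and is the ``simpler technique'' the paper alludes to; over $\Reals$ you would need to replace the exact delay by some finite abstraction that still pins down $f$. That is what the paper's notion of ``$t$ refines $s$ for $P$ with shift $x$'' accomplishes --- and as a bonus it yields the stronger separation statement and extends to trees.
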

Recall that an automaton is "unambiguous@unambiguous NFA" if there is at most one accepting run
per input\footnote{Note that when a tropical automaton is unambiguous,
       it makes no difference whether it is a max-plus or a min-plus automaton:
       It computes the same function.}.
Unambiguous automata form a very particular class of tropical automata. Most of the problems which are open or undecidable for general tropical automata are easily decidable for unambiguous automata: equivalence with another tropical automaton \cite{Krob94}, boundedness, sequentiality, description of the asymptotic behaviour \cite{ColombetDaviaud16}.  

It is noteworthy that the decision of sequentiality actually applies to unambiguous automata and that algorithms described for larger classes (finitely or polynomially ambiguous), consist indeed in deciding first whether the 
tropical automaton is equivalent to some unambiguous one \cite{KlimannLombardyMairessePrieur04,KirstenLombardy09}.

The above \cref{theorem:lombardy-mairesse} belongs to a fascinating corpus of mathematical statements of the form `if $X$ belongs both to class~$C$ and to class~$D$, then it belongs to class~$E$', where~$E$ is structurally simpler than both~$C$ and~$D$ (often $D$ is some form of dual of~$C$). An archetypical example arises in descriptive set theory: Suslin's theorem states that 
    \begin{quote}
    if a set is analytic and coanalytic, it is Borel. 
    \end{quote}
Many other instances of this pattern exist. For instance in automata theory, if an infinite tree language is Büchi and its complement is Büchi, it is weak (Rabin's theorem \cite{Rabin70}). This extends to cost-functions over infinite trees: if a cost-function over infinite trees is both B-Büchi and S-Büchi, it is quasi-weak ; over infinite words, it is even weak (Kuperberg and Vanden Boom \cite{KuperbergBoom11,KuperbergBoom12}). For languages of infinite words beyond regular, if a language is $\omega B$ and $\omega S$ definable, then it is $\omega$-regular (Skrzypczak \cite{Skrzypczak14}). In language theory, a language which is both $\Sigma_2$ and $\Pi_2$ definable is definable in the two variables fragment (Thérien and Wilke \cite{TherienWilke04}). 
Also, a language which is both the support and the complement of the support of a rational series over a field is regular \cite{RestivoReutenauer84}.
This list continues on and on.

In many situations such statements arise in fact from a more general result of `separation' (or of `interpolation' in the logical terminology).   For instance, Lusin's theorem is the separation version of Suslin's theorem: It states that
   \begin{quote}
   for $X\subseteq Y$ with $X$ analytic and $Y$ coanalytic, then $X\subseteq Z\subseteq Y$ for some Borel set~$Z$.
   \end{quote}
Such separation results imply the characterization version. For instance, Suslin's result follows from Lusin's theorem: take $X=Y$ to be the set which is both analytic and coanalytic.
Then $X\subseteq Z\subseteq Y=X$ for $Z$ Borel; hence $X$ is Borel. This relationship is general.
The results of Rabin, Vanden Boom and Kuperberg, and Skrzypczak, for instance, exist in a `separation variant'.

\paragraph*{Contribution}
The natural question that we answer in this work is thus:
  \begin{quote}
  Does there exist a separation version of \cref{theorem:lombardy-mairesse} ?
  \end{quote}
In this paper, we provide a positive answer to this question. It takes the following form:
  \begin{restatable}[separation for tropical tree automata]{theorem}{StateMainTheorem}%
      \label{theorem:main}
      Given a "max-plus automaton"~$\maxA$ and a "min-plus automaton"~$\minA$
      such that\footnote{In this statement, we assume
          that $\bot$ is incomparable with other elements, and thus
          $\sem\maxA$ and $\sem\minA$ have same support.}
      \begin{align*}
      \sem\maxA\leqslant\sem\minA\ ,
      \end{align*}
      there exists effectively an "unambiguous tropical automaton"~$\sepA$ such that
      \begin{align*}
      \sem\maxA\leqslant\sem\sepA\leqslant\sem\minA\ .
      \end{align*}
  \end{restatable}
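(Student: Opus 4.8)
The plan is to build $\sepA$ as a \emph{deterministic} bottom‑up automaton equipped with a weight function and a final output map, so that unambiguity comes for free and the two inequalities reduce to a single transparent computation along the unique run. A state of $\sepA$ reached over a subterm $s$ will be a triple $(R,P,t)$, where $R$ and $P$ are the sets of states of $\maxA$, resp.\ of $\minA$, that are reachable over $s$ (maintained by the usual subset construction, hence deterministically), and $t$ is a \emph{bounded} term: a canonical ``witness'' that mimics the behaviour of $s$ up to an additive shift that will be recorded in the weights.

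\textbf{Step 1 — combining the two automata.} First I would trim $\maxA$ and $\minA$ (keeping only states that are reachable and productive) and form the synchronised product $\biA$, whose runs over a term are exactly the pairs consisting of a run of $\maxA$ and a run of $\minA$ over that term, with final states $\biFinal$; carrying $\biA$ along lets $\sepA$ follow the max‑plus and the min‑plus behaviour at once, and in particular keep track of the gap between them.

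\textbf{Step 2 — the refinement relation.} The core notion is ``$t$ refines $s$ for a set of states $P$ with shift $x$'', meaning that the bounded term $t$ may be substituted for $s$ without altering the weighted behaviour towards the targets in $P$, except for the uniform additive constant $x$ (for $\maxA$ one asks $t$ to be at least as good after adding $x$, for $\minA$ at most as good). I would first verify its structural laws: it is reflexive with shift $0$, transitive with additive shifts, and a congruence — if $t_i$ refines $s_i$ for the appropriate $P_i$ with shift $x_i$, then $a(t_0,\dots,t_{n-1})$ refines $a(s_0,\dots,s_{n-1})$ for $P$ with shift $x_0+\cdots+x_{n-1}$, the $P_i$ being read off from $P$ and the transitions. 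The decisive statement is then a \emph{boundedness lemma}: there is a bound $B$ such that every term admits, simultaneously for all target sets arising as $\Prod(c)$ for a context $c$, a refinement by a term of height at most $B$, together with a finitely describable record of the corresponding shifts. This is the heart of the argument and the main obstacle: given a tall subterm one isolates inside it a loop $p\to p$ realised by a context $c$, and one must ``pump $c$ down'' — replace it by a bounded context — while showing that the max‑plus and the min‑plus weight contributions of $c$ can be absorbed into one common shift. This is precisely where $\sem\maxA\leqslant\sem\minA$ is used, via a twinning‑like cycle condition: if some loop contributed strictly more on the max‑plus side than on the min‑plus side, iterating it would eventually push $\sem\maxA$ strictly above $\sem\minA$. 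One also records here that every bounded witness $t$ satisfies $\sem\maxA(t)\leqslant\sem\minA(t)$, which is needed below.

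\textbf{Step 3 — the automaton, its weights, and the inequalities.} With the boundedness lemma available, let $\sepA$ have states $(R,P,t)$ as above; over a letter $a$ the target triple is computed deterministically from the children's triples — $R$ and $P$ by the subset construction, and $t$ by forming $a(t_0,\dots,t_{n-1})$ and renormalising it to its canonical bounded witness. Define the weight $\weight _\sep(\delta)$ of a transition $\delta$ to be exactly the shift incurred by that renormalisation; by transitivity and the congruence, the unique run $\rho_\sep$ of $\sepA$ over $s$ then witnesses that its final witness $t$ refines $s$ for $\biFinal$ with shift $\weightrun(\rho_\sep)$. Finally, to each accepting state $(R,\biFinal,t)$ — there are finitely many, each with $t$ bounded — attach as output a real number $v(t)$ chosen once and for all with $\sem\maxA(t)\leqslant v(t)\leqslant\sem\minA(t)$, which is possible since $\sem\maxA(t)\leqslant\sem\minA(t)$. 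The value of $\sepA$ on $s$ is then $v(t)$ corrected by the accumulated shift $\weightrun(\rho_\sep)$, and unfolding the meaning of ``refines $s$ for $\biFinal$ with shift $\weightrun(\rho_\sep)$'' gives $\sem\maxA(s)\leqslant\sem\sepA(s)\leqslant\sem\minA(s)$. As $\sepA$ is deterministic it is unambiguous, and every construction is effective, which proves the theorem. (If the common‑shift requirement in the boundedness lemma turns out to be too rigid, the fallback is to let the refinement relation for $\maxA$ and for $\minA$ be the two matching one‑sided comparisons, which only weakens Step 2's output inequalities to exactly the two bounds one needs; the structure of the proof is unchanged.)
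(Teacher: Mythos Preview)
Your overall architecture --- refinement with shift, its reflexivity/transitivity/congruence, a pumping argument exploiting $\sem\maxA\leqslant\sem\minA$ to bound the witness height, and states $(R,P,t)$ with transition weights recording the renormalisation shift --- is exactly the paper's. There is, however, a real gap in how you handle the component~$P$.

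You describe $P$ as the set of states of $\minA$ reachable over~$s$, maintained ``by the usual subset construction'', and you claim $\sepA$ is bottom-up deterministic. But the parameter that the refinement relation actually needs (and that you yourself invoke in Step~2 and at the end of Step~3) is the set of states \emph{productive in the context above~$s$}. It is only for $p,q\in\Prod(c)$ that the pumping/twinning argument goes through, because the contradiction is obtained on a single term $c\circ m^n\circ t$ on which \emph{both} an accepting $\maxA$-run through~$p$ and an accepting $\minA$-run through~$q$ must exist simultaneously. The productive set is not a function of~$s$; it depends on the surrounding context, so it cannot be computed by a bottom-up subset construction. If instead you try your ``simultaneous'' boundedness lemma (one bounded~$t$ refining~$s$ for every $P\in\Productive$ at once), you hit a different obstacle: the shift becomes a tuple $(x_P)_P$ rather than a single real, and can no longer serve as the weight of a transition.

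The paper's resolution is to drop bottom-up determinism and settle for unambiguity: $R=\Reach(s)$ is computed bottom-up deterministically, while $P=\Prod(c)$ is propagated top-down deterministically from the root (where $P=\biFinal$) using the siblings' $R$-components. The resulting automaton is unambiguous though not deterministic, each transition now sees a single~$P$ and hence a single shift, and your Step~3 computation then goes through verbatim. (A minor point: $\biA$ should be the \emph{disjoint union} of $\maxA$ and $\minA$, not a synchronised product --- the two automata run independently and there is nothing to synchronise.)
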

Let us stress that the above theorem is established in the  context of tropical automata over trees.
\cref{theorem:lombardy-mairesse} is now a corollary. Indeed, (a) tropical word automata are a particular case of tree automata over a ranked alphabet made of unary symbols only, plus a constant, and (b) assuming that~$f$ is both accepted by a "min-plus" and by a "max-plus automaton", then by \cref{theorem:main}, there exists a function~$h$ accepted by an "unambiguous tropical automaton" such that $f\leqslant h\leqslant f$. Thus~$f=h$ is accepted by an "unambiguous tropical automaton".

Note that, though the result is a generalization, the proof of \cref{theorem:main} is very different from the original one of \cref{theorem:lombardy-mairesse}.

Let us finally emphasize that particular care has been taken in order to obtain the result for real weights. Indeed, in the integer case (and as a consequence in the rational case), simpler techniques can be used that involve keeping in the finitely many states of the result automaton some explicit differences of partial weights up to a certain bound. Such a technique (as far as we know) cannot be used in the real case. 

\paragraph*{Other related work}
The class of unambiguous tropical automata form an interesting subclass of tropical automata.
In particular, equivalence is decidable, while the problem for max-plus or min-plus automata is undecidable \cite{Krob94}. Given a tropical automaton, deciding unambiguity is an open problem. It has been solved if the input automaton is finitely ambiguous in \cite{KlimannLombardyMairessePrieur04},  and when it is polynomially ambiguous in \cite{KirstenLombardy09}.

\paragraph*{Structure of the paper}
This paper is organized as follows.
In \cref{section:definitions}, we recall the standard definitions concerning trees, automata over trees, and tropical automata.
In \cref{section:main-theorem}, we establish our main theorem of separation, \cref{theorem:main}.
\Cref{section:conclusion} concludes.

\section{Definitions}
\label{section:definitions}

We review in this section classical notions concerning terms, and then automata and
tropical automata.\footnote{Note that this document incorporates
	internal links relating each notion used to its introduction. These have been generated using  
	the \LaTeX\ package \texttt{knowledge}, and are 
	usable in all PDF viewers (some do even offer an overview of the definition when hovering above a word).
	Any feedback welcome.}

\subsection{Terms and contexts}

A ""ranked alphabet"" is a set~$\A$, the elements of which are called ""letters"",
together with a map $\intro*\rank$ from~$\A$ to~\Nats. \phantomintro{rank}%
\AP
For~$n\in\Nats$, let~$\intro*\Terms(n)$ be the set of ""terms of arity~$n$"" over the alphabet $\A\cup\{1,\dots,n\}$ in which~$1,\dots,n$ are seen as special letters of rank~$0$ that are used exactly once in each term. 
We call simply ""terms"" the "terms of arity~$0$", and the set of "terms" is simply denoted~$\reintro*\Terms$.
\AP We call ""context"" the "terms of arity~$1$", and the set of "contexts" is simply denoted~$\intro*\Contexts$. Note that each "letter"~$a$ of "rank"~$r$ can naturally be seen as a "term of arity~$n$" consisting solely of a root labelled~$a$ and children~$1,\dots,n$.
\AP
The ""nodes"" of a "term of arity~$n$", $\intro*\Nodes(t)$ is the set of positions of the letters in the term.
The ""root node"" is denoted~$\intro*\root$. A node labelled~$i$ for~$i=1\dots n$ is called the ""$i$th-hole"". The "nodes" that are not "holes" are called ""inner nodes"".
Given a "node"~$x\in\Nodes(t)$, $t(x)$ denotes the letter it carries. Given a "letter of rank~$n$" and "terms" $t_0,\dots,t_{n-1}$, we denote by~$a(t_0,\dots,t_{n-1})$ the "term" that has~$a$ as "root", and as children from left to right~$t_0,\dots,t_{n-1}$.
\AP
The ""height of a term~$s$"", denoted~$\intro*\height(s)$, is the longest length of a "branch", for the standard meaning of a branch.
\AP
The ""size of a term~$s$"", denoted~$\intro*\size(t)$, is the number of "nodes" it has.
\AP Finally, given a "context"~$c$ and~$t$ a "term" (resp. $t$ another "context"),
we denote~$c\intro*\compterm t$ the "term" (resp. the "context") obtained by plugging the "root" of~$t$ in the "hole" of~$c$.

\subsection{Automata}

\AP A ""non-deterministic (tree) automaton"" (or simply an "automaton@NFA") has a finite set of ""states""~$\Q$, an input ranked alphabet~$\A$, a ""set of final states""~$\Final$, and a ""transition relation""~$\Transitions$ that consists of tuples of the form $(p_0,\dots,p_{n-1},a,q)$
in which~$a\in\A$ is a "letter of rank~$n$", and~$p_0,\dots,p_{k-1},q$ are "states" from~$\Q$.

\AP
A ""run of the automaton"" over a "term~$t$ of arity~$n$" is a
map~$\rho$ from $\Nodes(t)$ to~$\Q$  such that for all "inner nodes"~$x\in\Nodes(t)$ of
children~$x_0,\dots,x_{n-1}$, $(\rho(x_0),\dots,\rho(x_{n-1}),t(x),\rho(x))\in\Transitions$.
We shall write~$\tilde\rho(x)$ this "transition".
\AP An ""accepting run"" is a "run of the automaton" such that~$\rho(\root)\in\Final$. Given a "term"~$t$, $t$ is ""accepted by the automaton"" if there exists an "accepting run of the automaton" over~$t$.
The set of "terms" that are accepted is the ""language accepted by the automaton"".
We slightly refine the terminology for easier use.
\AP
Over a "term", a ""run to state~$q$"" is a run that assumes "state"~$q$ at the "root".
Over a "context", a ""run from state~$p$ to state~$q$"" signifies that the state assumed in the "hole" is~$p$, and the one assumed at the "root" is~$q$. An ""accepting run from~$p$"" is a "run from~$p$ to~$q$" for a "final state"~$q$.

\AP
An "automaton@NFA" is ""unambiguous@unambiguous NFA"" if for all input "terms"~$t$, there exists at most one "accepting run over it".
Said differently, for all input "terms"~$t$, either there are no "accepting runs over it", and the term is not "accepted",
or there is exactly one "accepting run", and the "term" is "accepted".

\AP
An ""automaton with weights""\footnote{This is not the a weighted automaton,
      which is parametrized by a semiring and not a monoid.
      This definition serves here just for holding the structure
      of our tropical automata irrespective of whether these are
      "min-plus" or "max-plus".}
$\automaton$ is a "non-deterministic automaton" together with a real ""weight""
for all "transitions" and all "final states", i.e. a map~$\intro*\weight$ from~$\Transitions\uplus\Final$ to~$\Reals$.
\AP 
Given a "run~$\rho$ of the automaton@run", the ""weight of the run""~$\intro*\weightrun(\rho)$ is the sum of the "weights" of~$\tilde\rho(x)$
for~$x$ ranging over the "inner nodes" of~$t$.
\AP
Given an "accepting run~$\rho$ of the automaton",
the ""weight of the accepting run""~$\intro*\weightacc(\rho)$ is the sum of the "weight of the run" and $\weight(\rho(\root))$.

\AP ""Tropical automata"" refer in this work to one of two forms of automata: "min-plus automata" and "max-plus automata" defined as follows. 
\AP A ""min-plus automaton""~$\automaton$ is an "automaton with weights" that computes a function:
\AP\phantomintro{\sem_{\min}}
\begin{align*}
 \sem{\automaton}_\min\colon \Terms&\longrightarrow \Reals\uplus\{\bot\} \\
 						 t &\longmapsto \begin{cases}
						 	&\bot\qquad\qquad\quad\qquad\text{if there are no "accepting runs" of $\automaton$ over~$t$,}\\
							&\min\{\weightacc(\rho)\mid\rho\text{ "accepting run" of~$\automaton$ over~$t$}\} \quad\text{otherwise,}
						 	\end{cases}
\end{align*}
in which~$\bot$ is a symbol that we understand as `undefined' (it appears classically as an absorbing element for~$+$ which is larger than all~$x\in\Reals$, i.e., the zero of the tropical semiring).
\AP A ""max-plus automaton"" is defined in an identical manner, but the semantics \phantomintro{\sem_{\max}}$\sem\automaton_\max$ is now defined using $\max$ instead of $\min$. Since it is always clear from the context, we denote simply by \phantomintro{\sem}$\sem\automaton$ either
$\sem\automaton_\min$ or $\sem\automaton_\max$ depending on whether~$\automaton$ has been declared as a "min-plus@min-plus automaton" or as a "max-plus automaton".

\AP An ""unambiguous tropical automaton""~$\automaton$ is a "tropical automaton" that has an "unambiguous@unambiguous NFA" underlying "automaton@NFA".
Note that in this case, $\sem\automaton_\max=\sem\automaton_\min$, and hence we call it simply "tropical automaton" and do not have to specify whether it is a "min-plus@min-plus automaton" or "max-plus@max-plus automaton".

\section{Separating "tropical automata"}
\label{section:main-theorem}

\subsection{Statement and structure of the proof}

The goal of this section is to prove our main theorem:
\StateMainTheorem*
\AP
From now on, we fix the "ranked alphabet"~$\A$,
a "max-plus automaton"~$\maxA$ and a "min-plus automaton"~$\minA$:%
\phantomintro{\automaton_{\min}}\phantomintro{\automaton_{\max}}\phantomintro{\weight_{\min}}\phantomintro{\weight_{\max}}\phantomintro{\Q_{\min}}\phantomintro{\Q_{\max}}\phantomintro{\Final_{\min}}\phantomintro{\Final_{\max}}\phantomintro{\Transitions_{\min}}\phantomintro{\Transitions_{\max}}
\begin{align*}
\reintro*\maxA&=(\reintro*\Q_\max,\A,\reintro*\Final_\max,\reintro*\Transitions_\max,\reintro*\weight_\max)&\!\!\text{and}\!\!\!\quad
\reintro*\minA&=(\reintro*\Q_\min,\A,\reintro*\Final_\min,\reintro*\Transitions_\min,\reintro*\weight_\min)
\end{align*}
such that
\begin{align*}
	\sem{\minA}\leqslant\sem\maxA\ .
\end{align*}
\AP
It will be convenient in what follows to consider a single "automaton with weights"~$\intro*\biA$
constructed as the disjoint union of~$\maxA$ and~$\minA$
(of course, it should be neither seen as a "min-plus automaton" nor as a "max-plus automaton").
Formally, we assume without loss of generality that~$\Q_\max$ and~$\Q_\min$ are disjoint, and we set
\AP \phantomintro\biFinal\phantomintro\biTransitions\phantomintro\biweight\phantomintro\biQ
\begin{align*}
  \reintro*\biQ&=\Q_\min\cup\Q_\max, \qquad
  \reintro*\biFinal=\Final_\max\cup\Final_\min,\qquad
  \reintro*\biTransitions=\Transitions_\max\cup\Transitions_\min,\\
  &\text{and}\quad\reintro*\biweight(v)=\begin{cases}
  			\weight_\max(v)&\text{for~$v\in\Transitions_\max\uplus\Final_\max$}\\
			\weight_\min(v)&\text{otherwise.}
			\end{cases}
\end{align*}
The rest of this section is devoted to the proof of \cref{theorem:main}, and is organized as follows.
In \cref{section:reachable-productive}, we use some classical automata constructions
for accessing in an unambiguous manner the "reachable" and "productive" states (\cref{lemma:lookahead}).
The combinatorial core of the proof is contained in \cref{section:pumping} in which we study
how the values of the automata may evolve in a context (\cref{lemma:pumping}), and use it 
for showing how terms can be substituted while preserving separability (\cref{corollary:always-refinable}).
We finally provide the construction of the automaton~$\sepA$ in  \cref{section:construction}, and establish its correctness (\cref{lemma:correctness}). This concludes the proof of \cref{theorem:main}.

\subsection{Reachable and productive states}
\label{section:reachable-productive}

An ingredient which is necessary in the proof is that the automaton we construct is always `aware' of what are the
states that may lead to an accepting run to the root. This section is concerned with this aspect, and involves only completely standard techniques for tree automata.

\AP
Given a "term"~$t$, set~$\intro*\Reach(t)\subseteq\biQ$ to be the set of "states"~$p$ such that there
is a "run over~$t$ to~$p$". We call such states ""reachable in~$t$"".
Given a "context"~$c$, set~$\intro*\Prod(c)\subseteq\biQ$ to be the set of "states"~$p$ such that there
is an "accepting run from~$p$". We call such states ""productive in~$c$"".
\AP We finally set \phantomintro\Reachable\phantomintro\Productive
\begin{align*}
  \reintro*\Reachable &= \{\Reach(c)\mid c\in\Contexts\}&\text{and}\quad
  \reintro*\Productive &= \{\Prod(t)\mid t\in\Terms\}.
\end{align*}

\AP We now construct an "automaton@NFA"~$\intro*\proA=(\Q_\pro,\A,\Final_\pro,\Transitions_\pro)$ that computes the productive states at each node of a term. \AP The "states" are~$\intro*\Q_\pro=\Reachable\times\Productive$. \AP The "final states" $\intro*\Final_\pro=\Reachable\times\{\biFinal\}$, and \AP
for all "letters~$a$ of rank~$n$", the automaton has a "transition" of the form\phantomintro{\Transitions_{\pro}}
  \begin{align*}
     ((R_0,P_0),\dots,(R_{n-1},P_{n-1}),a,(R,P))\in\Transitions_\pro
  \end{align*}
  whenever
  \begin{itemize}
  \item $R=\{r\in\biQ\mid (r_0,\dots,r_{n-1},a,r)\in\biTransitions,~r_j\in R_j\text{ for all~$j$}\}$, and
  \item $P_i=\{ r_i\in\biQ \mid (r_0,\dots,r_{n-1},a,p)\in\biTransitions,~r_j\in R_j\text{ for $j\neq i$},~p\in P\}$ for   all~$i=0\dots n-1$.
\end{itemize}
In the above definition, the constraint on~$R$ induces the computation in a bottom-up deterministic way of the set
of "states" that are "reachable" from the term below.  The constraint on~$P_i$ computes similarly
in a top-down deterministic way the set of "states" that are "productive" in the context above. We do not prove the correctness of this construction further. The importants aspects of this construction are summarized in the following lemma.
\begin{lemma}\label{lemma:lookahead}
  For all~$P\in\Productive$ and all "terms"~$t$,
  there exists one and one only "run of~$\proA$ over~$t$ to a state of the form~$(R,P)$@run to" for some~$R\in\Reachable$.
  And furthermore, $R=\Reach(t)$.
  
  For all~$R\in\Reachable$ and all "contexts"~$c$,
  there exists one and only one "accepting run of~$\proA$ over~$c$ from a state of the form~$(R,P)$@accepting run from"
  for some~$P\in\Productive$.
  And furthermore, $P=\Prod(c)$.
\end{lemma}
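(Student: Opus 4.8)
The plan is to verify that the two bottom-up and top-down "subset constructions" hidden inside $\proA$ behave deterministically and compute exactly $\Reach$ and $\Prod$, and then combine the two statements. First I would isolate the $R$-component of $\proA$: the transition constraint for $R$ says precisely that $R$ is a deterministic function of $(R_0,\dots,R_{n-1},a)$, namely $R$ is the set of states of $\biA$ reachable by reading $a$ from states in the $R_j$'s. This is the classical powerset construction of $\biA$, so by a straightforward induction on the structure of a "term"~$t$ one shows that there is exactly one map $x\mapsto R_x$ on $\Nodes(t)$ compatible with these constraints, and that $R_{\root} = \Reach(t)$; indeed, the invariant to maintain is $R_x = \Reach(t|_x)$ for every "node"~$x$, which is immediate from the definition of $\Reach$ and of the transition relation.

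Next I would handle the $P$-component symmetrically but in a top-down manner. Fixing a "context"~$c$ and a target "state" $(R,P)$ at the "root" with $P=\biFinal$ (the "final state" condition), the constraint on the $P_i$'s determines, given the $R_j$'s of the siblings and the $P$ at the parent, a unique value of $P_i$ at the child lying on the path to the "hole"; and at nodes off that path the value is irrelevant to "acceptance" except that the $R$-components are already pinned down by the bottom-up part. So I would argue that, once the $R$-components are fixed (as computed above on the subterms grafted along $c$), there is exactly one assignment of $P$-components along the branch from the "root" of $c$ to its "hole" that is consistent, and that the $P$ appearing at the "hole" equals $\Prod(c)$. Again the invariant is the natural one: at a node $x$ on the path to the "hole", the $P$-component equals the set of states $p$ such that there is an "accepting run from~$p$" over the "context" $c|_x$ obtained by cutting $c$ at $x$; this follows by induction on the "depth of a context~$c$" from the defining formula for $P_i$ together with the already-established identities $R_j=\Reach(\cdot)$ for the off-path subterms.

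Finally I would assemble the two parts: the first paragraph of the lemma is exactly the bottom-up statement applied to $t$ with the extra freedom of which $P\in\Productive$ is carried (the $P$-component along a "run to" is unconstrained by the transitions read bottom-up, so for each $P$ there is one and only one such "run to", with $R=\Reach(t)$ forced); the second paragraph is the top-down statement applied to $c$, where the $R$-component is the free parameter and $P=\Prod(c)$ is forced. The main obstacle I anticipate is purely bookkeeping rather than conceptual: one must be careful that in a "context" the $R$-components of the subterms hanging off the path to the "hole" are genuinely determined (forcing us to appeal to the bottom-up uniqueness on those subterms) before the top-down $P$-recursion can be claimed deterministic — i.e.\ the two constructions are intertwined and the argument must treat them in the right order (bottom-up first, then top-down), exactly as the informal description preceding the lemma suggests. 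No single step is hard; the care lies in stating the right inductive invariants and in not conflating the roles of the free versus forced component in the two halves.
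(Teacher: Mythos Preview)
Your plan is correct and matches the paper's intent: the paper does not actually give a proof of this lemma, merely the one-sentence remark before it that the $R$-constraint is bottom-up deterministic and the $P_i$-constraint is top-down deterministic, which is exactly what you unpack. The only point worth making explicit in your write-up is that the top-down computed $P_i$'s genuinely land in $\Productive$ (so that the states stay in $\Q_\pro$); this follows from your stated invariant $P_i=\Prod(c')$ for the appropriate extended context $c'$, but you should say it, since it is the one place the argument is not pure symbol-pushing.
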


\subsection{The central pumping lemma}
\label{section:pumping}

In this section, we establish the key~\cref{corollary:always-refinable}. The central concept here is to understand what it does for the value computed by~$\maxA$ and by~$\minA$ to substitute a subtree for another subtree. And more precisely, we devise sufficient conditions such that, after performing the substitution, the values of the two automata gets closer one to the other, up to some shifting. This property is expressed in \cref{lemma:shift-refine-improves}.

The key definition involved is the one of "refinement with shift" as defined now.
\begin{definition}\AP
  Given two "terms"~$s,t$, some set~$P\subseteq\biQ$, and some real number~$x$, then ""$t$ refines $s$ for~$P$ with shift~$x$"" if 
  \begin{itemize}
  \item $\Reach(s)=\Reach(t)$,
  \item for all "runs~$\rho$ of~$\maxA$ over~$s$ to a state~$p\in P$", there is
  		a "run~$\rho'$ over $t$ to state~$p$" such that
		\begin{align*}
		  \weightrun(\rho)\leqslant\weightrun(\rho')+x
		\end{align*}
		and
  \item for all "runs~$\tau$ of~$\minA$ over~$s$ to a state~$q\in P$", there is
  		a "run~$\tau'$ over $t$ to state~$q$" such that
		\begin{align*}
		  \weightrun(\tau')+x\leqslant\weightrun(\tau)
		\end{align*}
  \end{itemize}
\end{definition}

The justification of this definition is given by the following lemma. It shows how substituting~$s$ for~$t$ in a context
when "$t$ refines~$s$ with some shift@refinement with shift" is done while `staying in the separation interval'.
\begin{lemma}\label{lemma:shift-refine-improves}
  Let~$c$ be a "context", and~$s,t$ be "terms" such that "$t$ refines~$s$ for~$\Prod(c)$ with shift~$x$", then
  \begin{align*}
     \sem\maxA(c\compterm s)\leqslant \sem\maxA(c\compterm t) + x \leqslant \sem\minA(c\compterm t) +x \leqslant\sem\minA(c\compterm s)\ .
  \end{align*}
\end{lemma}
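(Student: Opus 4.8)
The plan is to prove the three inequalities of the chain separately: the two outer ones by a symmetric cut-and-paste argument on runs, and the middle one directly from the hypothesis $\sem\maxA\leqslant\sem\minA$ relating the two automata. Before that, one disposes of the case where some value is $\bot$. The definition of ``$t$ refines $s$ for $\Prod(c)$ with shift $x$'' includes $\Reach(s)=\Reach(t)$, and a run over $c\compterm s$ is exactly the gluing of an accepting run over $c$ from some state $p$ placed at the hole with a run over $s$ to $p$; hence $c\compterm s$ is accepted by $\maxA$ (resp.\ $\minA$) iff $\Reach(s)$ contains a state of $\Q_\max$ (resp.\ $\Q_\min$) that is productive in $c$, and the same holds for $c\compterm t$ with $\Reach(t)$ in place of $\Reach(s)$. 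Since $\Reach(s)=\Reach(t)$, acceptance of $c\compterm s$ and of $c\compterm t$ agree for each automaton, so $\sem\maxA(c\compterm s)=\bot$ iff $\sem\maxA(c\compterm t)=\bot$, and likewise for $\minA$; together with the fact that $\sem\maxA$ and $\sem\minA$ have the same support, this forces all four values to be $\bot$ simultaneously, and then there is nothing to prove. From now on all four are real numbers.

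For the leftmost inequality, fix an accepting run $\rho$ of $\maxA$ over $c\compterm s$ with $\weightacc(\rho)=\sem\maxA(c\compterm s)$ (the maximum is attained since $c\compterm s$ has finitely many runs), and let $p$ be the state it assumes at the hole of $c$. Restricting $\rho$ to $c$ and to $s$ produces an accepting run $\rho_c$ over $c$ from $p$ and a run $\rho_s$ over $s$ to $p$, with $\weightacc(\rho)=\weightrun(\rho_s)+\weightrun(\rho_c)+\weight(\rho(\root))$ because $\weightrun$ is additive over the partition of the inner nodes of $c\compterm s$ into those of $c$ and those of $s$. In particular $p\in\Prod(c)$ and $p\in\Q_\max$, so $\rho_s$ is a run of $\maxA$ over $s$ to $p\in\Prod(c)$, and the refinement hypothesis supplies a run $\rho'$ over $t$ to $p$ with $\weightrun(\rho_s)\leqslant\weightrun(\rho')+x$. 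Plugging $\rho'$ into the hole of $\rho_c$ yields an accepting run of $\maxA$ over $c\compterm t$ of weight $\weightrun(\rho')+\weightrun(\rho_c)+\weight(\rho(\root))\geqslant\weightacc(\rho)-x=\sem\maxA(c\compterm s)-x$, whence $\sem\maxA(c\compterm s)\leqslant\sem\maxA(c\compterm t)+x$.

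The rightmost inequality is the exact mirror: take an accepting run $\tau$ of $\minA$ over $c\compterm s$ of minimal weight, decompose it at the hole of $c$ into $\tau_c$ over $c$ and $\tau_s$ over $s$ ending in a state $p\in\Prod(c)\cap\Q_\min$, apply refinement to get a run $\tau'$ over $t$ to $p$ with $\weightrun(\tau')+x\leqslant\weightrun(\tau_s)$, plug it into $\tau_c$ to obtain an accepting run of $\minA$ over $c\compterm t$ of weight at most $\sem\minA(c\compterm s)-x$, and conclude $\sem\minA(c\compterm t)+x\leqslant\sem\minA(c\compterm s)$. The middle inequality $\sem\maxA(c\compterm t)+x\leqslant\sem\minA(c\compterm t)+x$ is merely the assumption $\sem\maxA\leqslant\sem\minA$ evaluated at the term $c\compterm t$. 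Chaining the three inequalities gives the statement.

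I do not expect a genuine obstacle here: the proof is a routine gluing argument once the definition of refinement is unwound. The only steps needing care are the bookkeeping that a run over $c\compterm t$ splits canonically into a run over $c$ and a run over $t$ with $\weightrun$ adding up, and checking in the degenerate case that the four occurrences of $\bot$ line up --- which is precisely where $\Reach(s)=\Reach(t)$ and the equal-support hypothesis are used.
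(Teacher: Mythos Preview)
Your proof is correct and follows essentially the same cut-and-paste argument as the paper: decompose an accepting run of $\maxA$ over $c\compterm s$ at the hole, replace the subtree run using the refinement hypothesis, and glue back to obtain an accepting run over $c\compterm t$ witnessing the first inequality; the third is symmetric and the middle is the global assumption $\sem\maxA\leqslant\sem\minA$. Your explicit handling of the $\bot$ case via $\Reach(s)=\Reach(t)$ and the equal-support hypothesis is a detail the paper leaves implicit.
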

\begin{proof}
   Let~$\rho$ be an "accepting run of~$\maxA$ over~$c\compterm s$".
   It can be decomposed as an "accepting run~$\rho_c$ over~$c$
   from some state~$p$" and a "run~$\rho_s$  over~$s$ to state~$p$".
   The run $\rho_c$ is a witness that~$p\in\Prod(c)\cap\Q_\max$. Hence, since
   "$t$ refines~$s$ for~$\Prod(c)$ with shift~$x$", there exists a "run $\rho_t$ over~$t$ to state~$p$"
   such that~$\weightrun(\rho_s)\leqslant\weightrun(\rho_t)+x$. By gluing~$\rho_t$ with~$\rho_c$,
   we obtain a new "accepting run~$\rho'$ of~$\maxA$ over $c\compterm s$", and furthermore,
   \begin{multline*}
       \weightacc(\rho)=\weightacc(\rho_c)+ \weightrun(\rho_s)\\
            \leqslant\weightacc(\rho_c)+\weightrun(\rho_t)+x=\weightacc(\rho')+x\ .
   \end{multline*}
   Since for all~$\rho$ there exists such a~$\rho'$, we obtain
   \begin{align*}
      \sem\maxA(c\compterm s)\leqslant\sem\maxA(c\compterm t)+x\ .
   \end{align*}

   The middle inequality simply comes from the key assumption~$\sem\maxA\leqslant\sem\minA$ in \cref{theorem:main}.
  
   The third inequality is established as the first one (it is in symmetric).
\end{proof}
The two following facts are straightforward to verify.
\begin{fact}[reflexivity of refinement with shift]\label{fact:shift-refine-reflexivity}
  For all "terms"~$s$, and all~$P\subseteq\biQ$, "$s$ refines~$s$ for~$P$ with shift~$0$".
\end{fact}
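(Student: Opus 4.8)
The plan is to unfold the definition of refinement with shift in the special case $t = s$, $x = 0$, and to check each of its three clauses using the identity witness. The first clause, $\Reach(s) = \Reach(s)$, holds trivially. For the second clause, given any run $\rho$ of $\maxA$ over $s$ to a state $p \in P$, I would take $\rho' := \rho$ (which is again a valid run over the unchanged term $s$, to the same state $p$); then $\weightrun(\rho) \leqslant \weightrun(\rho') + 0$ holds, in fact with equality. The third clause is symmetric: given any run $\tau$ of $\minA$ over $s$ to a state $q \in P$, take $\tau' := \tau$, so that $\weightrun(\tau') + 0 \leqslant \weightrun(\tau)$, again with equality.

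Thus every clause collapses to a trivial identity once the reflexive witness is substituted, and there is no real obstacle here — the statement is recorded only so that the later construction can invoke it as a base case, namely when a subtree is ``replaced'' by itself and no shift is incurred. The one point worth a moment's attention is simply that an automaton run over $s$ is literally reusable as a run over $s$, which is immediate since nothing about the term changes.
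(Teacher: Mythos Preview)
Your proposal is correct and matches the paper's treatment: the paper simply declares the fact ``straightforward to verify'' without spelling anything out, and your unfolding of the definition with the identity witnesses $\rho' := \rho$ and $\tau' := \tau$ is exactly the intended trivial verification.
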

\begin{fact}[transitivity of refinement with shift]\label{fact:shift-refine-transitivity}
  If "$t$ refines~$s$ for~$P$ with shift~$x$", and~"$u$ refines~$t$ for~$P$ with shift~$y$",
  then "$u$ refines~$s$ for~$P$ with shift~$x+y$".
\end{fact}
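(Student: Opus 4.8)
The plan is to unfold the definition of refinement with shift and chain the two hypotheses witness by witness; nothing more is needed. For the conclusion that $u$ refines $s$ for $P$ with shift $x+y$ there are three things to verify: equality of the reachable sets, the clause on $\maxA$-runs, and the clause on $\minA$-runs. Reachability is immediate, since $\Reach(s)=\Reach(t)$ (from $t$ refining $s$) and $\Reach(t)=\Reach(u)$ (from $u$ refining $t$) give $\Reach(s)=\Reach(u)$.

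For the $\maxA$-clause I would start from a run $\rho$ of $\maxA$ over $s$ that ends in some state $p\in P$. Because $t$ refines $s$ for $P$ with shift $x$, there is a run $\rho'$ over $t$ ending in $p$ with $\weightrun(\rho)\leqslant\weightrun(\rho')+x$. The observation that makes the argument go through is that $\rho'$ is itself a run of $\maxA$ over $t$ ending in a state of $P$: its target state is $p$, which lies in $\Q_\max$ (already witnessed by $\rho$) and in $P$ by assumption. Hence $u$ refining $t$ for $P$ with shift $y$ applies to $\rho'$ and produces a run $\rho''$ over $u$ ending in $p$ with $\weightrun(\rho')\leqslant\weightrun(\rho'')+y$. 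Chaining the two inequalities yields $\weightrun(\rho)\leqslant\weightrun(\rho'')+x+y$, exactly the required bound.

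The $\minA$-clause is entirely symmetric: from a run $\tau$ of $\minA$ over $s$ to a state $q\in P$, the first refinement gives $\tau'$ over $t$ to $q$ with $\weightrun(\tau')+x\leqslant\weightrun(\tau)$; this $\tau'$ is a $\minA$-run over $t$ to $q\in P$, so the second refinement gives $\tau''$ over $u$ to $q$ with $\weightrun(\tau'')+y\leqslant\weightrun(\tau')$; adding $x$ to the last inequality and chaining produces $\weightrun(\tau'')+x+y\leqslant\weightrun(\tau)$.

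I do not expect any genuine obstacle: this is the routine argument for transitivity of a relative-improvement relation. The only subtlety worth spelling out is that the intermediate run returned by the first refinement ends in the very same state one started from, so it still belongs to $P$ and sits on the correct side ($\Q_\max$ in the first case, $\Q_\min$ in the second); this is precisely what licenses invoking the second refinement, and it is also where one uses that $P$ is a single fixed set, the same for both refinements.
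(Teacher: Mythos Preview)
Your argument is correct and is exactly the routine unfolding the paper has in mind when it declares this fact ``straightforward to verify'' without further proof. The only point worth making explicit---that the witness run produced by the first refinement ends in the same state, hence still in $P$ and on the correct side of the $\Q_{\max}/\Q_{\min}$ split---you already spell out, so nothing is missing.
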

The next lemma is also purely mechanical.
\begin{lemma}[refinement with shift is a congruence]\label{lemma:shift-refine-congruence}
  Let~$((P_0,R_0),\dots,(P_{n-1},R_{n-1}),a,(P,R))\in \Transitions_\pro$,
  and for all~$i=0\dots n-1$, let $t_i,s_i$ be "terms" such that
  \begin{itemize}
  \item $\Reach(t_i)=R_i$, and
  \item "$t_i$ refines~$s_i$ for~$P_i$ with shift~$x_i$",
  \end{itemize}
  then "$a(t_0,\dots,t_{n-1})$ refines~$a(s_0,\dots,s_{n-1})$ for~$P$ with shift~$x_0+\cdots+x_{n-1}$".
\end{lemma}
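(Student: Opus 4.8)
The plan is to verify the three defining conditions of "$a(t_0,\dots,t_{n-1})$ refines $a(s_0,\dots,s_{n-1})$ for $P$ with shift $x_0+\cdots+x_{n-1}$" directly from the hypotheses, using the fact that a run over $a(u_0,\dots,u_{n-1})$ decomposes canonically into runs over the $u_i$ together with a single top transition at the root. Write $s = a(s_0,\dots,s_{n-1})$ and $t = a(t_0,\dots,t_{n-1})$.

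First I would check $\Reach(s) = \Reach(t)$. Since $\Reach(t_i) = R_i$ by hypothesis and, from "$t_i$ refines $s_i$ for $P_i$ with shift $x_i$", also $\Reach(s_i) = \Reach(t_i) = R_i$, the set of states reachable at the root of $a(u_0,\dots,u_{n-1})$ depends only on $(R_0,\dots,R_{n-1})$ and $a$ — explicitly it is $\{r \mid (r_0,\dots,r_{n-1},a,r)\in\biTransitions,\ r_j\in R_j\}$, which by the definition of $\Transitions_\pro$ is exactly the set $R$. Hence $\Reach(s) = R = \Reach(t)$.

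Next, the $\maxA$-condition. Let $\rho$ be a run of $\maxA$ over $s$ to a state $p\in P$. Decompose it as runs $\rho_i$ of $\maxA$ over $s_i$ to states $p_i$, plus the root transition $\delta=(p_0,\dots,p_{n-1},a,p)\in\Transitions_\max$, with $\weightrun(\rho)=\weightrun(\rho_0)+\cdots+\weightrun(\rho_{n-1})+\weight_\max(\delta)$. The key point is that $p_i\in P_i$: indeed $\delta\in\biTransitions$, the states $p_j$ for $j\neq i$ lie in $R_j$ (they are reachable in $s_j$, and $\Reach(s_j)=R_j$), and $p\in P$, so by the defining condition on $P_i$ in $\Transitions_\pro$ we get $p_i\in P_i$. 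Therefore "$t_i$ refines $s_i$ for $P_i$ with shift $x_i$" applies and yields a run $\rho'_i$ of $\maxA$ over $t_i$ to $p_i$ with $\weightrun(\rho_i)\leqslant\weightrun(\rho'_i)+x_i$. Gluing the $\rho'_i$ with the same root transition $\delta$ (legal since the states at the holes are again $p_0,\dots,p_{n-1}$) gives a run $\rho'$ of $\maxA$ over $t$ to $p$, and summing the inequalities gives $\weightrun(\rho)\leqslant\weightrun(\rho')+(x_0+\cdots+x_{n-1})$, as required. The $\minA$-condition is entirely symmetric: decompose, use the same argument that each $p_i\in P_i$ (reachability is about $\biQ$, so it does not matter that the run is over $\minA$), invoke the $\minA$-clause of each "$t_i$ refines $s_i$", glue back, and sum the reversed inequalities.

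I do not expect a genuine obstacle here; the lemma is "purely mechanical" as the text says. The only place requiring a moment's care is the verification that the decomposed states $p_i$ at the holes indeed lie in $P_i$ — this is where the specific shape of the $\Transitions_\pro$-transition and the identity $\Reach(s_j)=\Reach(t_j)=R_j$ are used — and the bookkeeping that the root transition can be reused verbatim when gluing the refined subruns, which holds because refinement preserves the target state at the root.
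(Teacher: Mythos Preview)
Your proposal is correct and follows essentially the same approach as the paper: decompose a run over $a(s_0,\dots,s_{n-1})$ into subruns and a root transition, use the definition of $\Transitions_\pro$ to see that each $p_i\in P_i$, apply the refinement hypothesis on each child, and glue back. In fact you are slightly more careful than the paper's own proof, since you explicitly verify the first clause $\Reach(a(s_0,\dots,s_{n-1}))=\Reach(a(t_0,\dots,t_{n-1}))$, which the paper's proof leaves implicit.
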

\begin{proof}
   Let~$\rho$ be a "run of~$\maxA$ over~$a(t_0,\dots,t_{n-1})$ to state~$p\in P$".
   The run~$\rho$ can be decomposed into a "transition"~$(p_0,\dots,p_{n-1},a,p)$ of "weight"~$x$ at the root,
   and a "run~$\rho_i$ of~$\maxA$ over~$t_i$ to~$p_i$" for all~$i=0\dots n-1$.
   For all~$i=0\dots n-1$, $p_i\in\Reach(t_i)=R_i$.
   Since furthermore~$p\in P$ and $((P_0,R_0),\dots,(P_{n-1},R_{n-1}),a,(P,R))\in\Transitions_\pro$,
   we obtain that~$p_i\in P_i$ for all~$i=0\dots n-1$ (second item of the definition of~$\Transitions_\pro$).
   Thus, since "$t_i$ refines~$s_i$ for~$P_i$ with shift~$x_i$",
   there exists a "run~$\rho'_i$ of~$\maxA$ over~$t_i$ to~$p_i$"
   such that~$\weight(\rho_i)\leqslant\weight(\rho'_i) + x_i$.
   We can combine all these runs~$\rho'_i$ together with the transition~$(p_0,\dots,p_{n-1},a,p)$ and obtain
   a new "run~$\rho'$ of~$\maxA$ over~$a(t_0,\dots,t_{n-1})$ to~$p$" such that
   \begin{align*}
       \weight(\rho)&=\weight(\rho_0)+\dots+\weight(\rho_{n-1}) + x\\
       					&\leqslant\weight(\rho'_0)+x_0+\dots+\weight(\rho'_{n-1})+x_{n-1}+x\\
						&=\weight(\rho') + x_0 + \dots + x_{n-1}\ .
   \end{align*}
   This shows half of the fact that "$a(t_0,\dots,t_{n-1})$ refines~$a(s_0,\dots,s_{n-1})$ for~$P$ with shift~$x_0+\cdots+x_{n-1}$". The other half is symmetric.
\end{proof}

We aim now at proving \cref{corollary:always-refinable} which states that all sufficiently large term
is `"shift refined@refinement with shift"'
by another one of smaller size. Beforehand, we need a pumping argument for establishing:
\begin{lemma}\label{lemma:pumping}
    Let~$P\in\Productive$, $R\in\Reachable$ and~$m$ be a "context", then there exists a real number~$x$
    such that:
    \begin{itemize}
    \item for all "runs~$\rho$ of~$\maxA$ over~$m$ from~$p$ to~$p$" with~$p\in P\cap R$, $\weightrun(\rho)\leqslant x$, and
    \item for all "runs~$\tau$ of~$\minA$ over~$m$ from~$q$ to~$q$" with~$q\in P\cap R$, $x\leqslant\weightrun(\tau)$.     
    \end{itemize}
\end{lemma}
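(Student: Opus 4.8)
The statement asks for a single real number $x$ that simultaneously upper-bounds all self-loop weights of $\maxA$ over $m$ (on states in $P\cap R$) and lower-bounds all self-loop weights of $\minA$ over $m$. The natural plan is to find, for each relevant state $p$, an upper bound for the $\maxA$-loops at $p$ and a lower bound for the $\minA$-loops at $p$, then show these two families of bounds interleave, so that some single $x$ fits between all the $\maxA$-bounds and all the $\minA$-bounds. Concretely, it suffices to show that for every $\maxA$-loop $\rho$ at some $p\in P\cap R$ and every $\minA$-loop $\tau$ at some $q\in P\cap R$ one has $\weightrun(\rho)\leqslant\weightrun(\tau)$; then any $x$ in the (nonempty, possibly degenerate) interval $[\sup_\rho\weightrun(\rho),\inf_\tau\weightrun(\tau)]$ works, and the sup is $\leqslant$ the inf precisely because of the pairwise inequality — provided we also know each individual sup and inf is finite.

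**Key steps.** First I would fix $p\in P\cap R$ together with a witnessing "run over some term $t$ to state $p$" (from $p\in R=\Reach$-membership) and a witnessing "accepting run over some context $c$ from $p$" (from $p\in P=\Prod$-membership). For any self-loop $\rho$ of $\maxA$ over $m$ from $p$ to $p$, and any self-loop $\tau$ of $\minA$ over $m$ from $q$ to $q$, consider the family of terms $t_p \mathbin{=} c\compterm (m\compterm (m\compterm\cdots\compterm(m\compterm t)))$ obtained by inserting $k$ copies of $m$. Running $\maxA$ through the $k$ copies of the $\rho$-loop and then the accepting context $c$ gives an accepting run of weight at least $k\cdot\weightrun(\rho)+(\text{const}_{\max})$, hence $\sem\maxA(t_p)\geqslant k\cdot\weightrun(\rho)+\text{const}_{\max}$; symmetrically, using the $\tau$-loop and the analogous witnesses for $q$ gives an accepting run of $\minA$ of weight at most $k\cdot\weightrun(\tau)+\text{const}_{\min}$, so $\sem\minA(t_q)\leqslant k\cdot\weightrun(\tau)+\text{const}_{\min}$. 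The key hypothesis $\sem\maxA\leqslant\sem\minA$ applied on a common term — here one has to be slightly careful: take the term $c_q\compterm(m^k\compterm t_p)$ or similar so that it admits both the high $\maxA$-run and the low $\minA$-run — forces $k\cdot\weightrun(\rho)+\text{const}_{\max}\leqslant k\cdot\weightrun(\tau)+\text{const}_{\min}$ for all $k$, and dividing by $k$ and letting $k\to\infty$ yields $\weightrun(\rho)\leqslant\weightrun(\tau)$. Finiteness of $\sup_\rho\weightrun(\rho)$: fix one particular $\minA$-loop $\tau_0$ (if none exists the lower-bound condition is vacuous and any large $x$ works; similarly if no $\maxA$-loop exists, take $x$ small), then every $\weightrun(\rho)\leqslant\weightrun(\tau_0)<\infty$; symmetrically $\inf_\tau\weightrun(\tau)$ is finite. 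Setting $x \mathbin{=} \sup_\rho\weightrun(\rho)$ (or any value in the interval) finishes.

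**Main obstacle.** The delicate point is arranging the term on which to invoke $\sem\maxA\leqslant\sem\minA$: I need a single term that simultaneously carries a large-weight accepting run of $\maxA$ and a small-weight accepting run of $\minA$, and the two "loop" states $p$ (for $\maxA$) and $q$ (for $\minA$) may differ. The fix is that the loops run over the \emph{same} context $m$, so I can stack the $\maxA$-part and the $\minA$-part in sequence: build a term of the shape $c_{\max}\compterm(m^k\compterm(m^k\compterm t_{\min}))$ where the first block of $m$'s is traversed by the $\maxA$-loop at $p$ and the second by the $\minA$-loop at $q$, with suitable connecting runs witnessing $p\in\Reach$, $q\in\Reach$, $p\in\Prod$, $q\in\Prod$ — these exist exactly because $p,q\in P\cap R$ and $P\in\Productive$, $R\in\Reachable$. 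On this single term, $\maxA$ has an accepting run of weight $\geqslant k\cdot\weightrun(\rho)+O(1)$ and $\minA$ has an accepting run of weight $\leqslant k\cdot\weightrun(\tau)+O(1)$; then $\sem\maxA\leqslant\sem\minA$ gives the inequality, and the asymptotics in $k$ do the rest. Everything else is the routine gluing of runs already used in \cref{lemma:shift-refine-improves}.
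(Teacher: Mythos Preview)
Your overall strategy is the paper's strategy: prove the pairwise inequality $\weightrun(\rho)\leqslant\weightrun(\tau)$ for every $\maxA$-loop $\rho$ at some $p\in P\cap R$ and every $\minA$-loop $\tau$ at some $q\in P\cap R$ by a pumping argument, then pick $x$ between the (finite) set of $\maxA$-loop weights and the (finite) set of $\minA$-loop weights. That part is fine.

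Where you go astray is in what you call the ``main obstacle''. There is no obstacle. You seem to treat ``$p\in R$'' and ``$q\in R$'' as giving you two \emph{different} witnessing terms $t_p,t_q$, and similarly ``$p\in P$'' and ``$q\in P$'' as giving two different contexts $c_p,c_q$; this is why you then scramble to glue two $m^k$-blocks in sequence. But recall the definitions: $R\in\Reachable$ means $R=\Reach(t)$ for a \emph{single} term $t$, and $P\in\Productive$ means $P=\Prod(c)$ for a \emph{single} context $c$. Hence the \emph{same} $t$ has runs to both $p$ and $q$, and the \emph{same} $c$ has accepting runs from both $p$ and $q$. The paper therefore just takes the one term
\[
u_n \;=\; c\compterm \underbrace{m\compterm\cdots\compterm m}_{n}\compterm t,
\]
builds on it an accepting $\maxA$-run through $p$ (weight $n\,\weightrun(\rho)+\text{const}$) and an accepting $\minA$-run through $q$ (weight $n\,\weightrun(\tau)+\text{const}$), applies $\sem\maxA(u_n)\leqslant\sem\minA(u_n)$, and lets $n\to\infty$.

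Your two-block term $c_{\max}\compterm m^k\compterm m^k\compterm t_{\min}$, as you describe it (``first block traversed by the $\maxA$-loop at $p$, second by the $\minA$-loop at $q$''), does not actually work: for the $\maxA$-run you would need a run of $\maxA$ over $m^k\compterm t_{\min}$ ending in $p$, and you have not arranged one. If instead you let both runs loop through \emph{both} blocks --- which is what the universal witnesses $c,t$ allow --- then the construction collapses to the paper's $c\compterm m^{2k}\compterm t$ and the second block is redundant. So: right plan, but drop the imagined obstacle and the two-block detour; the single-block pump with the canonical witnesses $c,t$ is all that is needed.
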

\begin{proof}
  Let~$t$ be a "term" such that~$\Reach(t)=R$, and $c$ be a "context" such that~$\Prod(c)=P$.
  
  \noindent\textbf{$\vartriangleright$ Claim:} We claim first that for all "runs~$\rho$ of~$\maxA$ over~$m$ from~$p$ to~$p$" with~$p\in P\cap R$
  and all "runs~$\tau$ of~$\minA$ over~$m$ from~$q$ to~$q$" with~$q\in P\cap R$, $\weightrun(\rho)\leqslant \weightrun(\tau)$.
  \AP
  Indeed, otherwise, there would exist some runs~$\rho,\tau$ as above such that~$\weightrun(\rho)>\weightrun(\tau)$.
  \AP I.e.
  \begin{align}
   \weightrun(\tau)-\weightrun(\rho)<0\ . \tag{$\star$}\label{eq-lemma-pumping}
  \end{align}
  Consider now for all~$n>0$ the  term:
  \begin{align*}
      u_n = c\compterm \overbrace{m\compterm \cdots \compterm m}^{\text{$n$-times}}\compterm t\ .
  \end{align*}
  Let~$\rho'$ be some "accepting run over~$c$ from~$p$" (this is possible since~$p\in P=\Prod(c)$).
  Let~$\tau'$ be some "accepting run over~$c$ from~$q$" (this is possible since~$q\in P=\Prod(c)$).
  Let~$\rho''$ be some "run over~$t$ to~$p$" (this is possible since~$p\in R=\Reach(t)$).
  Let~$\tau''$ be some "run over~$t$ to~$q$" (this is possible since~$q\in R=\Reach(t)$).
  
  By concatenating~$\rho'$, $n$-times $\rho$, and~$\rho''$, we obtain an "accepting run"~$\rho_n$ over~$u_n$
  of weight~$\weightacc(\rho_n)=\weightacc(\rho')+n\,\weightrun(\rho)+\weightrun(\rho'')$.
  Similarly, by concatenating~$\tau'$, $n$-times $\tau$, and~$\tau''$, we obtain an "accepting run"~$\tau_n$ over~$u_n$
  of weight~$\weightacc(\tau_n)=\weightacc(\tau')+n\,\weightrun(\tau)+\weightrun(\tau'')$.

  Furthermore, since~$\sem\maxA(u_n)\leqslant\sem\minA(u_n)$,
  $\weightacc(\rho_n)\leqslant\weightacc(\tau_n)$. We obtain
  \begin{align*}
     0&\leqslant\weightacc(\tau_n)-\weightacc(\rho_n)\\
       &=\weightacc(\tau')+n\,\weightrun(\tau)+\weightrun(\tau'')-\weightacc(\rho')-n\,\weightrun(\rho)-\weightrun(\rho'')\\
       &=(\weightacc(\tau') + \weightrun(\tau'') -\weightacc(\rho') -\weightrun(\rho'')) + n(\weightrun(\tau)-\weightrun(\rho))\ .
  \end{align*}
  However, using \eqref{eq-lemma-pumping}, this last quantity tends to $-\infty$ when~$n$ tends to~$\infty$. It contradicts  its non-negativeness. The claim is established.
  \medskip
  
  We can now establish the lemma. Let~$Y$ be the set of weights~$\weight(\rho)$ for~$\rho$
  ranging over the "runs of~$\maxA$ over~$m$ from~$p$ to~$p$" with~$p\in P\cap R$.
  Similarly, let~$Z$ be the set of weights~$\weight(\tau)$ for~$\tau$
  ranging over the "runs of~$\minA$ over~$m$ from~$q$ to~$q$" with~$q\in P\cap R$.
  The above claim states that for all~$y\in Y$ and all~$z\in Z$, $y\leqslant z$.
  This implies the existence of some real number~$x$ such that for all~$y\in Y$, $y\leqslant x$,
  and for all~$z\in Z$, $x\leqslant z$
  (note that proving it requires to treat the case of~$Y$ and/or~$Z$ being empty, and thus requires a case distinction).
  This is exactly the statement of the lemma.
\end{proof}

\begin{lemma}\label{lemma:always-refinable}
There exists a computable~$\k\in\Nats$ such that for all $P_0\in\Reachable$ and all "terms"~$s$ of "height" more than~$\k$,
there exists effectively a "term"~$t$ such that "$t$ refines $s$ for~$P$ with some shift" and $\size(t)<\size(s)$.
\end{lemma}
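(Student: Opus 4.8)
The plan is to find, deep inside~$s$, a non-trivial loop context that can be deleted while, by \cref{lemma:pumping}, moving the values of~$\maxA$ and~$\minA$ no further than a fixed amount~$z$ apart; that amount will serve as the required shift. Throughout, write~$s_x$ for the subterm of~$s$ rooted at a node~$x$.

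First I would fix, for each state~$p\in\Reach(s)\cap\Q_\max$, a run of~$\maxA$ over~$s$ to~$p$ of maximal weight, and for each~$q\in\Reach(s)\cap\Q_\min$ a run of~$\minA$ over~$s$ to~$q$ of minimal weight; there are at most~$|\biQ|$ of these ``reference runs''. Choose~$\k$ to be a computable bound such as~$|\Reachable|\cdot(|\biQ|^{|\biQ|}+2)$. When~$\height(s)>\k$, some branch of~$s$ has more than~$\k$ nodes, so by pigeonhole more than~$|\biQ|^{|\biQ|}+1$ of them form a chain~$x_0\prec x_1\prec\cdots\prec x_N$ along which the reachable set is a fixed~$R=\Reach(s_{x_i})$. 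Below~$x_i$ each reference run is a run over~$s_{x_i}$, so it assigns a state of~$R$ there; colouring~$x_i$ by the tuple of states it receives from the reference runs uses at most~$|R|^{|\biQ|}\le|\biQ|^{|\biQ|}$ colours, hence some~$x_a\prec x_b$ get the same colour, meaning that every reference run takes the same value at~$x_a$ and at~$x_b$. Let~$c$ be the context obtained from~$s$ by replacing~$s_{x_a}$ with the hole, and~$m$ the context with~$s_{x_a}=m\compterm s_{x_b}$, so that~$s=c\compterm m\compterm s_{x_b}$; put~$t=c\compterm s_{x_b}$. Since~$a<b$, the context~$m$ is non-trivial, so~$\size(t)<\size(s)$, and~$\Reach(t)=\Reach(s)$ because the reachable set of~$c\compterm u$ depends only on~$c$ and~$\Reach(u)$, while~$\Reach(s_{x_b})=R=\Reach(s_{x_a})$.

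Next I would check that~$t$ ``refines~$s$ for~$P$'' (I take~$P\in\Productive$, which is the relevant case): fix a context~$c^{\ast}$ with~$\Prod(c^{\ast})=P$, and set~$P'=\Prod(c^{\ast}\compterm c)\in\Productive$ --- equivalently, $P'$ is the set of states admitting a run over~$c$ to some state of~$P$. Apply \cref{lemma:pumping} to~$m$, $P'$ and~$R$ to obtain a real~$z$ that upper-bounds every weight of a run of~$\maxA$ over~$m$ from~$r$ to~$r$ with~$r\in P'\cap R$ and lower-bounds every such run of~$\minA$. Now take~$p\in P\cap\Reach(s)\cap\Q_\max$ and let~$\sigma$ be the state its reference run assigns at both~$x_a$ and~$x_b$. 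Concatenating that reference run with an accepting run over~$c^{\ast}$ from~$p$ (which exists as~$p\in\Prod(c^{\ast})$) witnesses~$\sigma\in\Prod(c^{\ast}\compterm c)\cap\Reach(s_{x_a})=P'\cap R$, so the portion of the reference run strictly between~$x_a$ and~$x_b$ is a run of~$\maxA$ over~$m$ from~$\sigma$ to~$\sigma$, of weight at most~$z$; deleting it yields a run of~$\maxA$ over~$t$ to~$p$ of weight at least (the maximal weight of a run of~$\maxA$ over~$s$ to~$p$) minus~$z$. Dually, for~$q\in P\cap\Reach(s)\cap\Q_\min$, deleting the loop over~$m$ in the reference run to~$q$ (whose weight is at least~$z$) gives a run of~$\minA$ over~$t$ to~$q$ of weight at most (the minimal weight of a run of~$\minA$ over~$s$ to~$q$) minus~$z$. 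Since~$\Reach(t)=\Reach(s)$ provides runs over~$t$ to every relevant state, these two inequalities give precisely that~$t$ ``refines~$s$ for~$P$ with shift~$z$'' (an arbitrary run of~$\maxA$ over~$s$ to~$p$ is matched by a maximal-weight run of~$\maxA$ over~$t$ to~$p$, and symmetrically). Effectiveness is immediate: $\k$ is computable, and ``$t$ refines~$s$ for~$P$ with some shift'' is a decidable property, so a suitable~$t$ can be found by inspecting the finitely many terms of size below~$\size(s)$.

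The step I expect to be the real obstacle is making the deleted context independent of the target state: deleting a single loop between two consecutive chain nodes does not work, because matching a run of~$\maxA$ over~$s_{x_a}$ to~$p$ by one over~$s_{x_b}$ to~$p$ also drags in the uncontrolled differences between the weights of runs over~$s_{x_b}$ to the various intermediate states that run visits, and those need not be bounded. Pinning down one optimal run per target state and colouring the chain by their joint state pattern is exactly what produces a single pair~$(x_a,x_b)$ that is a loop for all of them at once while keeping~$\k$ uniform; a secondary, smaller subtlety is to feed \cref{lemma:pumping} the set~$P'=\Prod(c^{\ast}\compterm c)$ rather than~$\Prod(c)$, so that the loop states actually lie in the set for which that lemma guarantees a separating threshold.
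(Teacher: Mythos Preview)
Your argument is correct and follows essentially the same route as the paper: fix one optimal ``reference'' run per target state, pigeonhole along a long branch so that all these runs loop simultaneously over a non-trivial context~$m$ while the reachable set stays fixed, delete~$m$, and use \cref{lemma:pumping} applied to~$m$, the reachable set~$R$, and the appropriate productive set to supply the shift. The verification that the loop state lies in~$P'\cap R$ and the unpacking of the refinement inequalities match the paper's computation line for line.

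Two small differences are worth noting. First, the paper additionally pigeonholes on the productive set~$\Prod(d\compterm c)$ being constant along the chain; your proof shows this is not actually needed, since feeding \cref{lemma:pumping} the set~$P'=\Prod(c^\ast\compterm c)$ (rather than insisting it equal~$\Prod(c^\ast\compterm c\compterm m)$) already places every loop state in~$P'\cap R$. Second, you index the reference runs by~$p\in\Reach(s)$ rather than by~$p\in P_0$; this makes the resulting term~$t$ independent of the target set~$P$ (only the shift depends on it), a mild strengthening. Both choices slightly sharpen the bound on~$\k$ but do not change the structure of the argument.
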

\begin{proof}
   Let~$\k$ be $(4|\biQ|)^{|\biQ|}$.
   Let us fix a "context"~$d$ such that $\Prod(d)=P_0$.

   Consider now a "term"~$s$ of "height" larger than~$\k$ and some~$P_0\in\Reachable$.
   We aim at removing some piece of this "term" while achieving the conclusions of the lemma.

   For all "states"~$p\in P_0$, set~$\rho_p$ to be an optimal "run of~$\biA$ over~$s$ to~$p$", i.e.,
   \begin{itemize}
   \item if~$p\in\Q_\max$, then for all "runs~$\tau$ of~$\maxA$ over~$s$ to~$p$", $\weightrun(\tau)\leqslant\weightrun(\rho_p)$, and
   \item if~$p\in\Q_\min$, then for all "runs~$\tau$ of~$\minA$ over~$s$ to~$p$", $\weightrun(\rho_p)\leqslant \weightrun(\tau)$.
   \end{itemize}
   
   Since the longest branch of~$t$ has length greater than $2^{|\biQ|}2^{|\biQ|}|\biQ|^{|\biQ|}$, we can apply the 
   pigeonhole principle to the various ways to split this branch in two, and get a factorisation of~$s$ into
   \begin{align*}
       s = c\circ m \compterm s'\ ,
   \end{align*}
   in which~$c$ is a "context", $m$ is a non-empty "context", and $s'$ is a "term" such that
   \begin{itemize}
   \item $\Reach(s') = \Reach(m\compterm s')$; let~$R$ be this set;
   \item $\Prod(d\compterm c) = \Prod(d\compterm c\compterm m)$; let~$P$ be this set;
   \item for all~$p\in P_0$, there exists a state~$q_p\in\biQ$ such that~$\rho_p$ is decomposed into
	   a "run~$\tau_p$ over~$s'$ to~$q_p$", a "run~$\tau'_p$ over~$m$ from~$q_p$ to~$q_p$", and  a "run~$\tau''_p$
	   over~$s'$ to~$q_p$".
   \end{itemize}
  Let us define now our term~$t$ as:
   \begin{align*}
       t&=c \compterm s'\ .
   \end{align*}
   Since~$s=c\compterm m\compterm s'$, our new term~$t$ is nothing but~$s$
   in which the non-empty part corresponding to~$m$ has been removed. Hence~$\size(t)<\size(s)$.
   
   We shall prove now that "$t$ refines~$s$ for~$P_0$ with shift~$x$" where $x$
   is obtained by applying \cref{lemma:pumping} to~$P,R$ and~$m$.
   
   Let~$\rho$ be a "run of~$\maxA$ over~$s$ to state~$p$" for some~$p\in P_0$.
   We know that the run~$\rho_p$ as defined above is such that~$\weightrun(\rho)\leqslant\weightrun(\rho_p)$.
   Finally, let~$\rho'$ be the "run over~$t=c\compterm s'$ to~$p$"
   obtained by gluing~$\tau_p$ and~$\tau''_p$ together. 
   We have:
   \begin{align*}
   \weightrun(\rho)&\leqslant\weightrun(\rho_p) \tag{by optimality of~$\rho_p$}\\
   			&\leqslant\weightrun(\tau_p)+\weightrun(\tau'_p)+\weightrun(\tau''_p) \tag{decomposion of~$\rho_p$}\\
			&\leqslant\weightrun(\tau_p)+x+\weightrun(\tau''_p) \tag{by choice of~$x$ and \cref{lemma:pumping}}\\
			&\leqslant\weightrun(\rho')+x\ .\tag{definition of~$\rho'$}
   \end{align*}

   Hence, we have proved the first half of the definition of `"$t$ refines~$s$ for~$P_0$ with shift~$x$"'.
   The second half is symmetric. Overall, we conclude that "$t$ refines~$s$ for~$P_0$ with shift~$x$".
\end{proof}

Using iteratively the above \cref{lemma:always-refinable}, as long as the "height" of the term is larger than~$\k$, together with \cref{fact:shift-refine-reflexivity} and~\ref{fact:shift-refine-transitivity}, we obtain the following corollary.
\begin{corollary}\label{corollary:always-refinable}
There exists a computable~$\k\in\Nats$ such that for all $P\in\Reachable$ and all "terms"~$s$
there exists effectively a "term"~$t$ of "height" at most~$\k$ which "refines $s$ for~$P$ with some shift".
\end{corollary}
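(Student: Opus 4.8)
The plan is simply to iterate Lemma~\ref{lemma:always-refinable}, keeping the same computable constant~$\k$ that it provides. Fix $P\in\Reachable$ and a "term"~$s$. If $\height(s)\leqslant\k$, we are already done: by \cref{fact:shift-refine-reflexivity}, "$s$ refines~$s$ for~$P$ with shift~$0$", and $s$ itself has "height" at most~$\k$. Otherwise $\height(s)>\k$, and Lemma~\ref{lemma:always-refinable} applied to~$P$ and~$s$ yields, effectively, a "term"~$s_1$ such that "$s_1$ refines $s$ for~$P$ with some shift@refinement with shift" — call the shift~$x_1$ — and $\size(s_1)<\size(s)$.

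Next I would repeat this on~$s_1$, then on~$s_2$, and so on: as long as the current "term"~$s_i$ satisfies $\height(s_i)>\k$, apply Lemma~\ref{lemma:always-refinable} to~$P$ and~$s_i$ to obtain~$s_{i+1}$ with "$s_{i+1}$ refines~$s_i$ for~$P$ with some shift@refinement with shift"~$x_{i+1}$ and $\size(s_{i+1})<\size(s_i)$. Since $\size(s)>\size(s_1)>\size(s_2)>\cdots$ is a strictly decreasing sequence of natural numbers, the process halts after at most~$\size(s)$ iterations, say at step~$j$, with a "term"~$t=s_j$ of "height" at most~$\k$. Applying \cref{fact:shift-refine-transitivity} repeatedly along the chain $s\to s_1\to\cdots\to s_j$ (legitimate because the target set~$P$ is the same at every step), we conclude that "$t$ refines $s$ for~$P$ with some shift@refinement with shift", namely with shift $x_1+\cdots+x_j$. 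Effectiveness is clear: each step is effective by Lemma~\ref{lemma:always-refinable}, and the number of steps is bounded a priori by~$\size(s)$.

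I do not expect a genuine obstacle in this corollary; it is a routine termination-and-composition argument on top of Lemma~\ref{lemma:always-refinable}. The only points deserving a line of care are (i) that $\size$ strictly decreases at each application, which is exactly what Lemma~\ref{lemma:always-refinable} guarantees and which ensures termination, and (ii) that the set~$P$ for which refinement holds is preserved throughout the iteration, so that the transitivity fact (\cref{fact:shift-refine-transitivity}) can be chained. All the actual difficulty — the pumping and the congruence properties — has already been absorbed into \cref{lemma:pumping}, \cref{lemma:shift-refine-congruence}, and \cref{lemma:always-refinable}.
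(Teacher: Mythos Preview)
Your proposal is correct and follows exactly the approach the paper indicates: iterate \cref{lemma:always-refinable} while $\height>\k$, using the strict decrease of $\size$ for termination, \cref{fact:shift-refine-reflexivity} for the base case, and \cref{fact:shift-refine-transitivity} to chain the shifts. The paper compresses all of this into a single sentence; your write-up simply spells it out.
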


\subsection{The construction}
\label{section:construction}

\AP We are now ready to construct our separating automaton~$\sepA$. It is defined as follows:\phantomintro{\automaton_{\sep}}
\begin{align*}
 \reintro*\sepA=(\Q_\sep,\A,\Final_\sep,\Transitions_\sep,\weight_\sep)\ ,
\end{align*}
in which the set of states is%
\phantomintro{\Q_{\sep}}
\begin{multline*}
  \reintro*\Q_\sep= \{(R,P,t)\mid R\in\Reachable,\ P\in\Productive,\\
  			\ t\in\Terms,\ \Reach(t)=R,\ \height(t)\leqslant \k\}\ ,
\end{multline*}
(where~$\k$ is the constant from \cref{corollary:always-refinable}),
the \AP "final states", together with their "weight", are\phantomintro{\Final_{\sep}}\phantomintro{\weight_{\sep}}
\begin{align*}
  \reintro*\Final_\sep&=\{(R,P,t)\in\Q_\sep\mid P=\biFinal,\ R\cap\biFinal \neq\emptyset\}
  &\text{with}\quad\reintro*\weight_\sep(R,\biFinal,t)=\sem\maxA(t)\ ,
\end{align*}
and the "transition relation" and the weights are defined as follows.
\AP
For a "letter~$a$ of rank~$n$", there is a transition of the form%
\phantomintro{\Transitions_{\sep}}
\begin{align*}
  \delta=((R_0,P_0,t_0),\dots,(R_{n-1},P_{n-1},t_{n-1}),a,(R,P,t))\in \reintro*\Transitions_\sep\quad\text{with}\quad\reintro*\weight_\sep(\delta)=x\ ,
\end{align*}
whenever
\begin{itemize}
\item $((R_0,P_0),\dots,(R_{n-1},P_{n-1}),a,(R,P))$ is a "transition" of $\Transitions_\pro$. 
\item\AP  $(t,x)=\sr_P(a(t_0),\dots,a(t_{n-1}))$, where $\sr$ is a map of the following form:
	\phantomintro\sr
	\begin{align*}
	\reintro*\sr_P\colon \Terms &\longrightarrow \Terms \times \Reals\\
	s&\longmapsto
			(t,x)\qquad \text{such that~"$t$ refines~$s$ for~$P$ with shift~$x$".}
	\end{align*}
	(Such a map exists thanks to~\cref{corollary:always-refinable}.)
\end{itemize}

Let us first note:
\begin{lemma}\label{lemma:sep-total}
  For all~$P\in\Productive$ and all "terms"~$s$, there exists exactly one "run of~$\sepA$ over~$s$ to a state of
  the form~$(R,P,t)$".
\end{lemma}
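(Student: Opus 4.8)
The statement to prove is Lemma~\ref{lemma:sep-total}: for every $P\in\Productive$ and every term $s$, there is exactly one run of $\sepA$ over $s$ to a state of the form $(R,P,t)$. The plan is to proceed by structural induction on $s$, mirroring the way $\Transitions_\sep$ was layered on top of $\Transitions_\pro$, and to leverage the corresponding uniqueness statement for $\proA$ (first part of \cref{lemma:lookahead}) as a black box. The key observation is that a run of $\sepA$ over $s$ consists of two pieces of data glued together: the underlying run of $\proA$ (giving the $(R,P)$-components at every node), which by \cref{lemma:lookahead} is uniquely determined once we fix the target $P$-component at the root, and then, on top of that skeleton, the $t$-components, which are forced node-by-node by the deterministic map $\sr_P$ appearing in the second bullet of the definition of $\Transitions_\sep$.

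\textbf{Existence.} Write $s=a(s_0,\dots,s_{n-1})$ with $a$ of rank $n$. By the first part of \cref{lemma:lookahead} applied with target $P$, there is a (unique) run of $\proA$ over $s$ to $(R,P)$ with $R=\Reach(s)$; in particular it uses some transition $((R_0,P_0),\dots,(R_{n-1},P_{n-1}),a,(R,P))\in\Transitions_\pro$ at the root, and restricts to a run of $\proA$ over $s_i$ to $(R_i,P_i)$ for each $i$, with $R_i=\Reach(s_i)$ and $P_i\in\Productive$. By the induction hypothesis applied to each $s_i$ with the set $P_i$, there is a run $\rho_i$ of $\sepA$ over $s_i$ to a state $(R_i,P_i,t_i)$. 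Now set $(t,x)=\sr_P(a(t_0),\dots,a(t_{n-1}))$; since $\height(t)\le\k$ and (by \cref{corollary:always-refinable}, via the definition of $\sr_P$) $t$ refines $a(t_0),\dots,a(t_{n-1})$ for $P$ with some shift, we have $\Reach(t)=\Reach(a(t_0),\dots,a(t_{n-1}))=R$, so $(R,P,t)\in\Q_\sep$ and $\delta=((R_0,P_0,t_0),\dots,(R_{n-1},P_{n-1},t_{n-1}),a,(R,P,t))\in\Transitions_\sep$ by construction. Gluing $\rho_0,\dots,\rho_{n-1}$ with $\delta$ yields a run of $\sepA$ over $s$ to $(R,P,t)$. (One should check that $\Reach(t_i)=R_i$ holds so that the $R$-component of $t$ is computed correctly; this is exactly the invariant $\Reach(t_i)=R_i$ recorded in the state, which holds for the inductively built $\rho_i$.)

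\textbf{Uniqueness.} Suppose $\rho$ is any run of $\sepA$ over $s$ to a state of the form $(R,P,t)$. Projecting away the third component of every state, $\rho$ induces a run of $\proA$ over $s$ to a state $(R,P)$, which by \cref{lemma:lookahead} is the unique such run; hence the $(R_i,P_i)$ labels at the children of the root, and recursively at every node, are determined, and in particular $R=\Reach(s)$. Restricting $\rho$ to each subterm $s_i$ gives a run of $\sepA$ over $s_i$ to $(R_i,P_i,t_i')$; by the induction hypothesis (applied with $P_i$) this run, and in particular $t_i'$, is uniquely determined, so $t_i'=t_i$. Finally, the root transition $\delta$ of $\rho$ lies in $\Transitions_\sep$, so its second bullet forces $t$ (and the weight) to equal the unique value $\sr_P(a(t_0),\dots,a(t_{n-1}))$. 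Thus $\rho$ is uniquely determined. The base case $n=0$ (a leaf $s=a$) is the same argument with no children: \cref{lemma:lookahead} pins down $R=\Reach(a)$, and $t$ is forced to be the first component of $\sr_P(a)$.

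\textbf{Main obstacle.} There is no real difficulty here; the lemma is essentially bookkeeping. The one point requiring a little care is making sure the invariant $\Reach(t_i)=R_i$ stored in $\sepA$'s states is genuinely maintained — i.e.\ that the $R$-component in the output of $\sr_P$ matches the $R$ computed bottom-up by $\Transitions_\pro$ — since this is what guarantees $\delta$ is a legal transition and that $(R,P,t)\in\Q_\sep$. This follows because refinement with shift preserves $\Reach$ (first clause of the definition) together with the fact that $\Reach(a(t_0),\dots,t_{n-1}))$ depends only on $a$ and on $\Reach(t_0),\dots,\Reach(t_{n-1})$, matching exactly the bottom-up rule for $R$ in the definition of $\Transitions_\pro$.
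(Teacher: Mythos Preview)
Your proof is correct and follows exactly the same approach as the paper's: project onto the first two components and invoke \cref{lemma:lookahead} for uniqueness there, then observe that the third component is forced bottom-up by the map $\sr_P$, and establish existence by a straightforward structural induction. The paper compresses all of this into three lines, whereas you have written out the details (including the check that the invariant $\Reach(t_i)=R_i$ is maintained), but the argument is the same.
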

\begin{proof}
Indeed, we have seen in \cref{lemma:lookahead} that $\sepA$ is "unambiguous@NFA" on its first two components. Then the third component is computed in a bottom-up deterministic manner. Furthermore, it is easy to show by induction that on every input "term" there is an accepting run.
\end{proof}

\begin{lemma}\label{lemma:sep-IH}
  Let~$\rho$ be a "run of $\sepA$ over~$s$ to~$(R,P,t)$", then 
  "$t$ refines~$s$ for~$P$ with shift~$\weightrun(\rho)$".
\end{lemma}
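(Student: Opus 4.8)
The plan is to proceed by structural induction on the "term"~$s$, mirroring exactly the bottom-up construction of $\sepA$ and invoking the congruence property of refinement with shift. So suppose $s=a(s_0,\dots,s_{n-1})$ and let $\rho$ be the "run of $\sepA$ over~$s$ to~$(R,P,t)$". By definition of a run, $\rho$ restricts to a "run~$\rho_i$ of~$\maxA$ over~$t_i$ to~$p_i$"—more precisely, to runs of $\sepA$ over each $s_i$ ending in some state $(R_i,P_i,t_i)$—together with the top transition $\delta=((R_0,P_0,t_0),\dots,(R_{n-1},P_{n-1},t_{n-1}),a,(R,P,t))\in\Transitions_\sep$ of "weight"~$\weight_\sep(\delta)=x$. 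By the induction hypothesis applied to each $s_i$, we get that "$t_i$ refines~$s_i$ for~$P_i$ with shift~$\weightrun(\rho_i)$", where $\rho_i$ denotes the sub-run of $\rho$ over $s_i$; note also $\Reach(t_i)=R_i$, which holds because the state $(R_i,P_i,t_i)$ lies in $\Q_\sep$.

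Next I would unfold the meaning of the transition $\delta$. Its first ingredient is that $((R_0,P_0),\dots,(R_{n-1},P_{n-1}),a,(R,P))\in\Transitions_\pro$; its second is that $(t,x)=\sr_P(a(t_0),\dots,a(t_{n-1}))$, so by the defining property of $\sr_P$ we have "$t$ refines~$a(t_0,\dots,t_{n-1})$ for~$P$ with shift~$x$". Now I apply \cref{lemma:shift-refine-congruence} to the transition $((P_0,R_0),\dots,(P_{n-1},R_{n-1}),a,(P,R))$ of $\Transitions_\pro$ together with the facts $\Reach(t_i)=R_i$ and "$t_i$ refines~$s_i$ for~$P_i$ with shift~$\weightrun(\rho_i)$"; this yields that "$a(t_0,\dots,t_{n-1})$ refines~$a(s_0,\dots,s_{n-1})$ for~$P$ with shift~$\weightrun(\rho_0)+\cdots+\weightrun(\rho_{n-1})$". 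Chaining this with "$t$ refines~$a(t_0,\dots,t_{n-1})$ for~$P$ with shift~$x$" via \cref{fact:shift-refine-transitivity}, I obtain that "$t$ refines $s$ with shift~$\weightrun(\rho_0)+\cdots+\weightrun(\rho_{n-1})+\weight_\sep(\delta) = \weightrun(\rho)$", where the last equality is just the definition of the weight of a run of $\sepA$ (the sum of transition weights over inner nodes). The base case $n=0$ (a leaf letter, i.e. $s$ a constant) is handled by the same argument with an empty family, using \cref{fact:shift-refine-reflexivity} to start, or is simply absorbed into the $n=0$ instance of the inductive step.

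The one point requiring genuine care—the main obstacle, such as it is—is the bookkeeping that the sub-runs $\rho_i$ of $\rho$ over the $s_i$ are indeed runs of $\sepA$ ending in the states $(R_i,P_i,t_i)$ that feed $\delta$, and that $\weightrun(\rho)$ decomposes as $\sum_i\weightrun(\rho_i)+\weight_\sep(\delta)$; this is immediate from the definition of a run of a tree automaton and of $\weightrun$, but it must be stated so that the weights line up. There is also the mild subtlety that \cref{lemma:shift-refine-congruence} is stated with the pair written $(P_i,R_i)$ whereas $\Transitions_\pro$ produces tuples written $(R_i,P_i)$—this is a harmless notational transposition in the source lemma and should simply be noted in passing. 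Everything else is the mechanical substitution of the two "facts" and the one "congruence lemma" already proved above, so the proof is short.
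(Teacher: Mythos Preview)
Your proposal is correct and follows essentially the same inductive argument as the paper's own proof: decompose the run at the root, apply the induction hypothesis to the children, use the congruence lemma to combine, and then transitivity with the $\sr_P$ step at the root. If anything, your write-up is slightly more careful—the paper's proof cites \cref{fact:shift-refine-transitivity} for the congruence step where it should cite \cref{lemma:shift-refine-congruence}, a slip you avoid.
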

\begin{proof}
  The proof is by induction on $\height(s)$. Assume~$s$ of the form~$a(s_0,\dots,s_{n-1})$.
  Let~$\rho$ be the "run of~$\sepA$ over~$s$ to~$(R,P,t)$", let~$\delta=((R_0,P_0,t_0),\dots,(R_1,P_1,t_1),a,(R,P,t)$ be the transition assumed by~$\rho$ at the "root".
  Let~$\rho_i$ be the run~$\rho$ restricted to the subterm~$s_i$.
  By induction hypothesis, "$t_i$ refines~$s_i$ for~$P_i$ with shift~$\weightrun(\rho_i)$".
  By \cref{fact:shift-refine-transitivity}, 
    "$a(t_0,\dots,t_{n-1})$ refines~$s$ for~$P$ with shift~$\weightrun(\rho_0)+\cdots+\weightrun(\rho_{n-1})$".
  By definition of~$\weight_\sep$, "$t$ refines~$a(t_0,\dots,t_{n-1})$ with shift~$\weight_\sep(\delta)$".
  By \cref{fact:shift-refine-transitivity}, we obtain that "$t$ refines $s$ with shift~$\weightrun(\rho_0)+\cdots+\weightrun(\rho_{n-1})+\weight_\sep(\delta) = \weightrun(\rho)$".
\end{proof}

We can now provide the concluding lemma of the proof of \cref{theorem:main}.
\begin{lemma}\label{lemma:correctness}
  $\sem\maxA\leqslant\sem\sepA\leqslant\sem\minA$\ .
\end{lemma}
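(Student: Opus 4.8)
The plan is to make the semantics of $\sepA$ completely explicit — using that $\sepA$ is unambiguous and the weight data recorded in its states — and then to read off the two inequalities directly from \cref{lemma:sep-IH} and \cref{lemma:shift-refine-improves}, the latter applied to the trivial context.

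First I would determine exactly when, and with which value, $\sepA$ accepts an input "term"~$s$. By \cref{lemma:lookahead} the projection of $\sepA$ onto its first two components is "unambiguous@NFA", and the third component is computed bottom-up deterministically, so $\sepA$ is "unambiguous@NFA". Write $c_0$ for the trivial "context" (a single "hole"); then $\Prod(c_0)=\biFinal$, so $\biFinal\in\Productive$ and \cref{lemma:sep-total} applies with $P=\biFinal$: there is exactly one "run of~$\sepA$ over~$s$ to a state of the form~$(R,\biFinal,t)$", and \cref{lemma:lookahead} moreover gives $R=\Reach(s)$. This run is accepting precisely when $(R,\biFinal,t)\in\Final_\sep$, i.e. when $\Reach(s)\cap\biFinal\neq\emptyset$, i.e. when $s$ is "accepted" by $\maxA$ or by $\minA$; since these two automata have the same support, this is exactly the set of $s$ with $\sem\maxA(s)\neq\bot$ (equivalently $\sem\minA(s)\neq\bot$). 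Hence $\sem\sepA$, $\sem\maxA$ and $\sem\minA$ have the same support, and outside it all three values are $\bot$, so the stated inequalities hold there.

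It then remains to treat $s$ in this common support. Let $\rho$ and $(R,\biFinal,t)$ be as above, so $\Reach(t)=R=\Reach(s)$ and $R\cap\biFinal\neq\emptyset$; in particular $t$ is "accepted" by $\maxA$ (and by $\minA$), so $\sem\maxA(t)\in\Reals$ and the weight $\weight_\sep(R,\biFinal,t)$ is well defined. Since $\rho$ is the unique "accepting run of~$\sepA$ over~$s$",
\begin{align*}
\sem\sepA(s)=\weightacc(\rho)=\weightrun(\rho)+\weight_\sep(R,\biFinal,t)=\weightrun(\rho)+\sem\maxA(t)\ .
\end{align*}
By \cref{lemma:sep-IH}, "$t$ refines~$s$ for~$\biFinal$ with shift~$\weightrun(\rho)$". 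Applying \cref{lemma:shift-refine-improves} with the trivial context $c_0$ (for which $\Prod(c_0)=\biFinal$ and $c_0\compterm s=s$, $c_0\compterm t=t$) gives
\begin{align*}
\sem\maxA(s)\leqslant\sem\maxA(t)+\weightrun(\rho)\leqslant\sem\minA(t)+\weightrun(\rho)\leqslant\sem\minA(s)\ ,
\end{align*}
and the second term of this chain is exactly $\sem\sepA(s)$; hence $\sem\maxA(s)\leqslant\sem\sepA(s)\leqslant\sem\minA(s)$, as required.

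I do not expect a genuine obstacle here: once \cref{lemma:sep-IH} is available, the proof is bookkeeping (tracking that $R=\Reach(s)$, that the "weight of the accepting run" splits as "weight of the run" plus $\sem\maxA(t)$, and that the supports match). The single point requiring a touch of care is that the fixed set $\biFinal$ baked into the construction of $\sepA$ must be reconciled with the parameter $\Prod(c)$ in the hypothesis of \cref{lemma:shift-refine-improves}; this is precisely what invoking the lemma on the trivial context $c_0$ achieves.
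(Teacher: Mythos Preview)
Your proof is correct and follows the same structure as the paper's: obtain the unique run to a state $(R,\biFinal,t)$ via \cref{lemma:sep-total}, invoke \cref{lemma:sep-IH} for the refinement, and then derive the two inequalities. The only difference is cosmetic: where the paper unfolds the definition of refinement by hand (picking an optimal accepting run of $\maxA$ over~$s$, transporting it to~$t$, and computing), you observe that this computation is exactly \cref{lemma:shift-refine-improves} applied at the trivial context $c_0$ with $\Prod(c_0)=\biFinal$, which is a tidier packaging of the same argument.
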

\begin{proof}
  Let~$s$ be a "term". By \cref{lemma:sep-total},
  there exists one and exactly one "run~$\rho_\sep$ of~$\sepA$ over~$s$ to a state of the form~$(R,\biFinal,t)$".
  By \cref{lemma:sep-IH}, "$t$ refines~$s$ for~$\biFinal$ with shift~$\weightrun(\rho_\sep)$".
  Note that in this case~$R=\Reach(s)=\Reach(t)$.

  Two cases can occur. If~$(R,\biFinal,t)$ is not "final". In this case, there is no "accepting run
  of~$\sepA$ over~$s$", and~$\sem\sepA(s)=\bot$.
  However,  $(R,\biFinal,t)\not\in\Final_\sep$ means $\Reach(t)\cap \biFinal=\emptyset$,
  hence~$\Reach(s)\cap\biFinal=\emptyset$. Thus $\sem\maxA(s)=\sem\minA(s)=\bot$.
  We indeed have~$\sem\maxA(s)\leqslant\sem\sepA(s)\leqslant\sem\minA(s)$.
  
  Otherwise, $(R,\biFinal,t)$ is "final", i.e. $R\cap\biFinal\neq\emptyset$. Assume for instance
  that there is some~$R\cap\biFinal\cap\Q_\max\neq\emptyset$ (it would be the same for~$\Q_\min$).
  This means that~$\sem\maxA(s)\neq\bot$.
  Since~$\sem\minA\geqslant\sem\maxA$, this implies also $\sem\minA(s)\neq\bot$.
  
  Let now~$\rho$ be an "accepting run of~$\maxA$ over~$s$" of maximal value, and let~$p$ be its "root" state.
  Since "$t$ refines~$s$ for~$\biFinal$ with shift~$\weightrun(\rho_\sep)$", and~$p\in\biFinal$,
  there exists a "run $\rho'$  over~$t$ to state~$p$"
  such that~$\weightrun(\rho)\leqslant\weightrun(\rho')+\weightrun(\rho_\sep)$.
  Hence,
  \begin{align*}
  \sem\maxA(s)&=\weightacc(\rho)\\
						&=\weightrun(\rho) + \weight_\max(p)\\
						&\leqslant \weightrun(\rho') + \weight_\max(p) + \weight(\rho_\sep)\\
						&\leqslant \sem\maxA(t) + \weight(\rho_\sep)\\
						&=\sem\sepA(s)
  \end{align*}
  In a symmetrical way, we obtain:
  \begin{align*}
  \sem\sepA(s)	&= \sem\maxA(t) + \weight(\rho_\sep)\\
  						&\leqslant \sem\minA(t) + \weight(\rho_\sep)\tag{assumption $\sem\maxA\leqslant\sem\minA$}\\
						&\leqslant\sem\minA(s)\ . \tag{as for the other inequality}
   \end{align*}
   Hence, we have established the expected $\sem\maxA(s)\leqslant\sem\sepA(s)\leqslant\sem\minA(s)$.
\end{proof}

\section{Conclusion}
\label{section:conclusion}

We have established a separation result for tropical automata over trees. 

We did not pay attention to the complexity of our construction. In a practical implementation, an improvement can easily be made: It is not necessary to use all terms $t$ of "height" up to~$\k$ in $\Q_\sep$: it is sufficient to keep minimal ones for the "shift refine relation" for each $P\in\Productive$.
\Cref{corollary:always-refinable} gives a crude upper bound on a sufficient number of these minimal terms. It is not clear if the exact bound is significantly better.

More interestingly, our result is under the assumption that~$\sem{\automaton_{{\max}}}\leqslant\sem{\automaton_{{\min}}}$. A natural variant is to invert the inequality and ask whether separation is possible when  $\sem{\automaton_{{\min}}}\leqslant\sem{\automaton_{{\max}}}$. Some separation results exist in both variants (like interpolation results in logic), while some do not (separation of Büchi automata, or Lusin's theorem). For tropical automata, the assumption  $\sem{\automaton_{{\min}}}\leqslant\sem{\automaton_{{\max}}}$ would be more complicated than
the one in our theorem: it can be witnessed for instance by the fact that  it is not decidable anymore \cite{Krob94}.

Another interesting question is whether similar results can hold for weights other than reals. For instance here, our proof requires for the weights of our automata to be equipped with a monoid structure, that it is commutative (otherwise weighted tree automata are not well defined), a total order (for the hypotheses of \cref{theorem:main} to be meaningful), that the product be compatible with the order, and archimedianity (for the pumping argument in \cref{lemma:pumping} to hold). The usefulness of each of these assumption could be studied. What if the monoid is not commutative (over words)? What if the order is not total (and be, for instance a lattice)? What if the operation is not archimedian (and what does it mean in these more general cases)? And in all these situations, do we capture interesting forms of automata?

More generally, these results of separation are fascinating, and it would be interesting to understand at high level what kind of abstract arguments may explain them, or at least some of them, uniformly. 

\bibliography{biblio.bib}

\end{document}